\let\newfloat\newfloat@ltx
\def\HC{\mathcal{H}}
\def\ad{^{\dagger}}
\newcommand{\fsnull}[1]{}
\newcommand{\old}[1]{}
\tikzset{every picture/.style=remember picture}
\newcommand{\poly}{\operatorname{poly}}
\newcommand{\AC}{\mathcal{A}}
\newcommand{\BC}{\mathcal{B}}
\newcommand{\CC}{\mathcal{C}}
\newcommand{\EC}{\mathcal{E}}
\newcommand{\SC}{\mathcal{S}}
\newcommand{\TC}{\mathcal{T}}
\renewcommand{\geq}{\geqslant}
\renewcommand{\leq}{\leqslant}
\newcommand*{\id}{\openone}
\newcommand{\bs}{\textsf{BS}}
\def\be{\begin{equation}}
\def\ee{\end{equation}}
\def\bs{\begin{split}}
\def\e{\end{split}}
\def\ba{\begin{eqnarray}}
\def\bea{\begin{eqnarray}}
\def\tea{\end{eqnarray}}
\def\ea{\end{eqnarray}}
\def\eea{\end{eqnarray}}
\newtheorem{theorem}{Theorem}
\newtheorem{lemma}{Lemma}
\newtheorem{corollary}{Corollary}
\newtheorem{definition}{Definition}
\def\be{\begin{equation}}
\def\te{\end{equation}}
\def\ee{\end{equation}}
\def\ba{\begin{eqnarray}}
\def\bea{\begin{eqnarray}}
\def\tea{\end{eqnarray}}
\def\ea{\end{eqnarray}}
\def\eea{\end{eqnarray}}
\newcommand{\beq}{\begin{equation}}
\newcommand{\eeq}{\end{equation}}
\begin{document}

\title{Exact spectral gaps of random one-dimensional quantum circuits}

\author{Andrew E. Deneris}
\affiliation{Information Sciences, Los Alamos National Laboratory, Los Alamos, NM 87545, USA}
\author{Pablo Bermejo}
\affiliation{Information Sciences, Los Alamos National Laboratory, Los Alamos, NM 87545, USA}
\affiliation{Donostia International Physics Center, Paseo Manuel de Lardizabal 4, E-20018 San Sebasti\'an, Spain}
\affiliation{Department of Applied Physics, Gipuzkoa School of Engineering, University of the Basque
Country (UPV/EHU), Plaza Europa 1, 20018 San Sebastián, Spain}
\author{Paolo Braccia}
\affiliation{Theoretical Division, Los Alamos National Laboratory, Los Alamos, NM 87545, USA}

\author{Lukasz Cincio}
\affiliation{Theoretical Division, Los Alamos National Laboratory, Los Alamos, NM 87545, USA}

\author{M. Cerezo}
\thanks{cerezo@lanl.gov}
\affiliation{Information Sciences, Los Alamos National Laboratory, Los Alamos, NM 87545, USA}
\affiliation{Quantum Science Center, Oak Ridge, TN 37931, USA}

\begin{abstract}
    The spectral gap of local random quantum circuits is a fundamental property that determines how close the  moments of the circuit's unitaries match those of a  Haar random distribution. When studying spectral gaps, it is common to bound these quantities using tools from statistical mechanics or via quantum information-based inequalities. By focusing on the second moment of one-dimensional unitary circuits where nearest neighboring gates act on sets of qudits (with open and closed boundary conditions), we show that one can exactly compute the associated spectral gaps. Indeed, having access to their functional form allows us to prove several important results, such as the fact that the spectral gap for closed boundary condition is exactly the square of the gap for open boundaries, as well as improve on previously known bounds for approximate design convergence. Finally, we  verify our theoretical results by numerically computing the spectral gap for systems of up to 70 qubits, as well as comparing them to gaps of random orthogonal and symplectic circuits. 
\end{abstract}

\maketitle

\section{Introduction}
Random quantum circuits have played an important role in quantum computation and quantum information sciences.  Their study has allowed researchers to find experiments capable of achieving a quantum advantage by offering classically hard-to-simulate sampling experiments~\cite{boixo2018characterizing,arute2019quantum,wu2021strong,dalzell2022randomquantum,oszmaniec2022fermion,huang2021provably}, providing insights into quantum  chaos~\cite{nahum2017quantum,von2018operator,nahum2018operator,ho2022exact} and entanglement transitions~\cite{li2018quantum,skinner2019measurement,jian2020measurement}, as well as improving the study of the trainability of variational quantum algorithms~ \cite{mcclean2018barren,cerezo2020cost,pesah2020absence,napp2022quantifying,ragone2023unified,fontana2023theadjoint,larocca2024review} and error mitigation techniques~\cite{hu2024demonstration}. 

In particular, random  circuits composed of local unitary gates  sampled independently and identically distributed (i.i.d.) according to the Haar measure over the standard representation of $\mathbb{U}(d^2)$, and  acting on neighboring pairs of qudits in a one-dimensional lattice,  have received considerable attention. Here, several works~\cite{harrow2009random,brandao2016local,hunter2019unitary,haferkamp2022random,haferkamp2021improved,brown2010random,nakata2017efficient,Haferkamp2022randomquantum,harrow2018approximate,chen2024efficient,chen2024incompressibility,belkin2023approximate,mittal2023local,schuster2024random} have provided bounds for their  spectral gap--the absolute value of the largest non-one eigenvalue of the circuit's $t$-th moment operator. Such bounds have proven to be extremely useful as they can be used to quantify the number of gates needed for the circuit to form approximate designs~\cite{dankert2009exact}, and can even lead to useful insights for circuits on generic architectures~\cite{belkin2023approximate}. Still, despite the tremendous importance of spectral gaps, their exact values are usually unknown.

In this work we contribute to the body of knowledge of random quantum circuits by exactly computing the $t=2$ spectral gap for unitary circuits composed of nearest-neighboring gates acting on sets of $m$ qudits in a line (open boundary conditions) and in a circle (closed boundary conditions). Our results show that the spectral gap for closed boundary conditions is exactly the square of the spectral gap for open boundary conditions for all $m$, as well as for any number of qudits $n$. Moreover, we prove that having access to the exact functional form of the spectral gaps allows us to provide tighter bounds on the number of layers needed for the circuit to become an approximate $2$-design. As a by-product of our results, we can also fully characterize the eigenvector associated with the spectral gap.

To finish, we also considered orthogonal and symplectic random circuits. While  we were unable to exactly compute their associated spectral gaps using our presented techniques, we numerically compute them  for systems of up-to $n=70$ qubits. Our numerical results indicate that in this case the closed and open boundary conditions are not related by a square, further evidencing that their behavior is more complex than that of unitary circuits.

\section{Framework} 

\subsection{Moment operators}
Let $\HC=(\mathbb{C}^d)^{\otimes n}$ be the Hilbert space of $n$ qudits. In this work we study random circuits $U:\HC\rightarrow \HC$ in a one-dimensional lattice composed of gates acting on alternating groups of 
 $m$ nearest-neighboring qudits (we assume for simplicity that $\eta=n/m\in\mathbb{N}$). Then, the circuit $U$ can be expressed as a product of layers of the form
\begin{equation}\label{eq:circuit}
    U=\prod_{l=1}^L U_{l}\,,
\end{equation}
where 
\begin{equation}\label{eq:layer}
 U_{l}=\left( \prod_{j = 1}^{\frac{\eta}{2}} U_{2j-1,2j}^l\right)(U_{\eta,1}^l)^\Delta \left(\prod_{j = 1}^{\frac{\eta-2}{2}} U_{2j,2j+1}^l\right)\,.
\end{equation}
Here, $U_{j,j'}$ is a local unitary that acts non-trivially on the  $j$-th and $j'$-th groups of neighboring $m$ qudits and trivially on the rest. Then,  $\Delta =0$ represents open boundary conditions (unitaries acting in a line), while $\Delta =1$ denotes closed boundary conditions (unitaries acting in a circle). We show such a circuit in Fig.~\ref{fig:circ}. 

We then assume that each $U_{j,j'}$ is sampled i.i.d. according to the Haar measure over some local group $G_{j,j'}^l$. Specifically, we will consider the cases when $G_{j,j'}^l=\mathbb{U}(d^{2m})$  $\mathbb{O}(d^{2m})$ or $\mathbb{SP}(d^{2m}/2)$ (assuming $d$ even). Crucially, it is known that in the large number of layers limit (i.e., as $L\rightarrow \infty$), the distribution of unitaries will converge to that of some global group $\mathbb{G}\subseteq \mathbb{U}(d^n)$ that is determined by the local groups from which the gates are sampled. 

In particular, we will focus on  the question: \textit{How fast does the distribution of unitaries $\EC_L$ obtained from a shallow circuit $U$ converge to being an approximate $2$-design over $\mathbb{G}$?} To answer this, it is convenient to define the second moment operators
\small
\begin{align}
    \TC_\mathbb{G}^{(2)}&=  \underset{U \sim \mathbb{G}}{\mathbb{E}}[U^{\otimes 2} \otimes (U^*)^{\otimes 2}] = \int_\mathbb{G}d\mu \,U^{\otimes 2} \otimes (U^*)^{\otimes 2}\,, \label{eq:psi-r}\\ 
    \TC_{\EC_L}^{(2)}&= \underset{U \sim \EC_L}{\mathbb{E}}[U^{\otimes 2} \otimes (U^*)^{\otimes 2}] = \int_{\EC_L}dU U^{\otimes 2}\otimes (U^*)^{\otimes 2} \label{eq:tiwrlG}\,,
\end{align}
\normalsize
as well as their difference
\begin{equation}
    \AC_L^{(2)}=\TC_\mathbb{G}^{(2)}- \TC_{\EC_L}^{(2)}\label{eq:A}\,.
\end{equation}
Above,  $\int_\mathbb{G}d\mu$ denotes the Haar measure over $\mathbb{G}$, and  $\int_{\EC_L}dU$ the average over the set of unitaries $\EC_L$ (and the associated distribution $dU$)  obtained from the  $L$-layered  circuit $U$ as in Eq.~\eqref{eq:circuit}. 

\begin{figure}[t]
    \centering
    \includegraphics[width=.9\columnwidth]{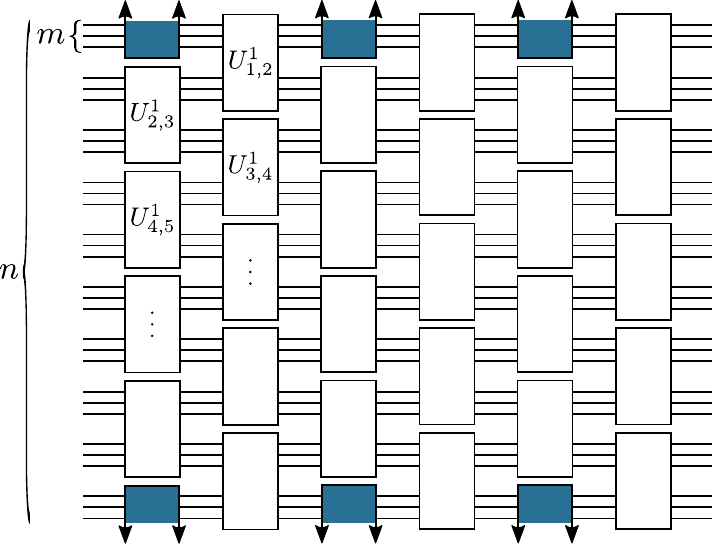}
    \caption{\textbf{Schematic representation of a one-dimensional random circuit.} As shown, random gates act on alternating groups of $m$ neighboring qudits in a brick-like fashion. The  colored gates are removed (added) with open (closed) boundary conditions.}
    \label{fig:circ}
\end{figure}

Due to the fact that the distribution of unitaries will converge to that of some global group as $L \rightarrow \infty$ we can therefore rephrase our question of interest as: \textit{How many layers $L$ are required for $\TC_{\EC_L}^{(2)}$ to be $\varepsilon$ close to $\TC_{\mathbb{G}}^{(2)}$}. As such, the matrix $\AC_L^{(2)}$ can be used to quantify how much the $2$-nd moments of the distributions differ. In particular, we will focus on the following definition.
\begin{definition}[Approximate design.]\label{def:design}
    We say that $\EC_L$ forms an $\varepsilon$-approximate $2$-design if $|\AC_L^{( t)}|_{\infty}\leq\varepsilon/d^n$. 
\end{definition}
We note that while we here focus on the infinity norm of $\AC_L^{(2)}$, known as the monomial definition of an approximate $t$-design, this is not the only way to quantify the closeness between two ensembles of unitaries. For instance, one can also use other norms, such as the operator diamond norm which bounds difference in measurements thanks to its operational meaning within the context of channel discrimination~\cite{nielsen2000quantum}. In addition, recent works have also focused on the  stronger notion of a design as measured via the relative error rather (i.e.,  checking wether $(1-\varepsilon)\TC_\mathbb{G}^{(2)}\preceq \TC_{\EC_L}^{(2)}\preceq(1+\varepsilon)\TC_\mathbb{G}^{(2)}$, where $A \preceq B$ if and only if $B - A$ is a completely positive map)~\cite{schuster2024random}. We decide to use the  monomial measure of Definition~\ref{def:design} as its evaluation is direct from the characterization of the eigenvalues of the moment operator, our main goal. Moreover, we note that once the spectral gap is computed, and the eigenvalues are known, then such knowledge can be used to evaluate other norms or definitions of designs. We leave this research director for future work.

Given that $\TC_{\mathbb{G}}^{(2)}$ represents the second moment of a unitary uniformly sampled from a group, then we can leverage the tools from Weingarten calculus (see the next section and also Ref.~\cite{collins2006integration,puchala2017symbolic,mele2023introduction}) to analytically evaluate this operator. However, the analysis of $\TC_{\EC_L}^{(2)}$ is not as straightforward, as the unitaries in $\EC_L$ might not form a group, or since  $dU$ might not be a Haar measure. As such, Weingarten calculus cannot be directly used to study $\TC_{\EC_L}^{(2)}$.  However, we can simplify the evaluation of $\TC_{\EC_L}^{(2)}$ as follows. From Eq.~\eqref{eq:circuit}, we can define $\TC_{\EC_L}^{(2)} = \prod_{l = 1}^L \TC_{\EC_l}^{(2)}$ where $\TC_{\EC_l}^{(2)}$ is the second moment operator over a single layer, $U_l$, of the random circuit. Then, from Eq.~\eqref{eq:layer} we find
\begin{equation}\label{eq:layer-moment}
 \TC_{\EC_l}^{(2)}=\left( \prod_{j = 1}^{\frac{\eta}{2}} \TC^{(2)}_{G_{2j-1,2j}^l}\right) (\TC_{G_{\eta,1}^{l}}^{(2)})^\Delta  
 \left(\prod_{j = 1}^{\frac{\eta-2}{2}} \TC^{(2)}_{G_{2j,2j+1}^l}\right)\,.
\end{equation}
 Here $\TC_{G_{j,j'}^{l}}^{(2)}$ represents the second moment operator of a single random local gate that is sampled according to the Haar measure over  $G_{j,j'}^l$. In particular, if one assumes  that $ G_{j, j'}^{l}$ is the same  for all $j, j'$ and $l$, then all the local moment operators are equal. More generally however, if the local gates are sampled from different groups, but they are constant across layers, i.e., $G_{j, j'}^{l}$ is the same  for all $l$, then $\TC_{\EC_l}^{(2)}=\TC_{\EC_l'}^{(2)}$ for all $l,l'$. In this case, we have
\begin{equation}\label{eq:layer-circuit}
    \TC_{\EC_L}^{(2)} = (\TC_{\EC_1}^{(2)})^L\,.
\end{equation}

Indeed from the previous discussion we find that the study of $\TC_{\EC_L}^{(2)}$ can be reduced to that of $\TC_{\EC_l}^{(2)}$, which in turn requires the evaluation of the local operators $\TC_{G_{j,j'}^{l}}^{(2)}$. Given that these correspond to the moments of a unitary sampled from the Haar measure over a group, we can study them via Weingarten calculus presented in the next section.

\subsection{Weingarten calculus}

Weingarten calculus is a mathematical tool used to compute averages of functions of unitaries---and their complex conjugates---when the unitaries are randomly sampled according to the Haar measure, $d\mu$, over a group $G$. We refer the reader to~\cite{mele2023introduction} for an introduction to this technique, but for completeness we review here some basic concepts.

First, we define as ${\rm comm}^{(2)}(G)$ the second order commutant of $G$, i.e., the vector space of all matrices commuting with the two-fold tensor product of the unitaries in $G$. Mathematically, this means

\begin{equation}
    {\rm comm}^{(2)}(G)=\{A\in\BC(\HC^{\otimes 2})\,|\, [A,U^{\otimes 2}]=0\,,\forall U\in G\}\,,\nonumber
\end{equation}

where $\BC(\HC^{\otimes 2})$ denotes the set of bounded operators acting on the Hilbert space $\HC^{\otimes 2}$. Since $\TC_G^{(2)}$ projects into ${\rm comm}^{(2)}(G)$~\cite{mele2023introduction}, then given a basis $\{ P_{\mu} \}_{\mu = 1}^{D}$ of ${\rm comm}_G^{(2)}$ one can readily find that
\begin{equation}
    \TC_G^{(t)} = \sum_{\mu, \nu = 1}^{D} (W_{G}^{-1})_{\mu \nu}\ket{P_{\mu}}\rangle\langle \bra{P_{\nu}}\,,
\end{equation}
where $W_{G}$ is the Gram Matrix of the commutant's basis under the Hilbert-Schmidt inner product. That is,  $(W_{G}^{-1})_{\mu \nu}= \Tr[P_{\mu}^\dag P_{\nu}]$. Above, $\ket{A}\rangle$ denotes the vectorization of an operator $A\in \BC(\HC^{\otimes 2})$, such that if $A=\sum_{i,j}a_{ij}\ket{i}\bra{j}$, then $\ket{A}\rangle=\sum_{i,j}a_{ij}\ket{i}\ket{j}$. Then, the inner product of two vectorized operators is $\langle \langle B|A\rangle\rangle=\Tr[B\ad A]$.

\subsection{Spectral gap}

From the discussion in the previous section we know that $\TC_\mathbb{G}^{(2)}$ is a projector onto the second order commutant of the group $\mathbb{G}$, meaning that it can only have eigenvalues that are equal to one, or equal to zero. That is, the eigenvectors of $\TC_\mathbb{G}^{(2)}$ can be divided into two subspaces 
\begin{itemize}
    \item $\SC_\mathbb{G}(1)=\{\ket{v}\,|\,\TC_\mathbb{G}^{(2)}\ket{v}=\ket{v}\}$,
    \item $\SC_\mathbb{G}(0)=\{\ket{v}\,|\,\TC_\mathbb{G}^{(2)}\ket{v}=0\}$.
\end{itemize}
Moreover, we can use Eq.~\eqref{eq:layer-circuit} to note that while $\TC_{\EC_1}^{(2)}$ is not a projector on its own, it must have eigenvalues equal to one, eigenvalues strictly smaller than one, and eigenvalues equal to zero. That is, 
the eigenvectors of $\TC_{\EC_1}^{(2)}$ can be divided into three subspaces
\begin{itemize}
    \item $\SC_{\EC_1}(1)=\{\ket{v}\,|\,\TC_{\EC_1}^{(2)}\ket{v}=\ket{v}\}$,
    \item  $\SC_{\EC_1}(\lambda)=\{\ket{v}\,|\,\TC_{\EC_1}^{(2)}\ket{v}=\lambda\ket{v}\}$, with $0<\lambda<1$,
   \item  $\SC_{\EC_1}(0)=\{\ket{v}\,|\,\TC_{\EC_1}^{(2)}\ket{v}=0\}$.
\end{itemize}
Importantly, $\TC_\mathbb{G}^{(2)}$ and $\TC_{\EC_1}^{(2)}$ share exactly the same number of eigenvalues equal to one, as well as the associated eigenvectors. That is $\SC_\mathbb{G}(1)=\SC_{\EC_1}(1)$.  This is due to the fact that in the limit $L\rightarrow \infty$ one must recover (by assumption) $\TC_{\EC_L}^{(2)}=(\TC_{\EC_1}^{(2)})^L=\TC_\mathbb{G}^{(2)}$ . Hence, combining Definition~\ref{def:design} with Eqs.~\eqref{eq:A} and~\eqref{eq:layer-circuit} shows that 
\begin{equation}\label{eq:spectral-gap}
    |\AC_L^{( t)}|_{\infty}=(\lambda_d^{(n,m)})^L
\end{equation}
where $\lambda_d^{(n,m)}$ is known as the spectral gap, or the largest non-one eigenvalue of $\TC_{\EC_1}^{(2)}$. That is,
\begin{equation}
    \lambda_d^{(n,m)}=\max_{\ket{v}\in \SC_{\EC_1}(\lambda) } \frac{\bra{v}\TC_{\EC_1}^{(2)}\ket{v}}{\langle v|v\rangle}\,.
\end{equation}
In what follows we use Eq.~\eqref{eq:spectral-gap} to evaluate the spectral gap through a study of the eigenvalues of $\TC_{\EC_1}^{(2)}$.

\section{Theoretical results for unitary random circuits}

Our main theoretical result is an exact formula for the spectral gap when $G_{j,j'}^l=\mathbb{U}(d^{2m})$ $\forall j,j',l$, and thus when $\mathbb{G}=\mathbb{U}(d^n)$ for both open and closed boundary conditions. In addition, we also provide in the next section an analytical expression for the associated eigenvector.

In particular, we find that the following theorem holds:
\begin{theorem}\label{th:theorem_1}
Let $U$ be a circuit as in Eqs.~\eqref{eq:circuit} and~\eqref{eq:layer}, i.e.,
\begin{equation}
    U=\prod_{l=1}^L U_{l}\,,\nonumber
\end{equation}
and
\begin{equation}
 U_{l}=\left( \prod_{j = 1}^{\frac{\eta}{2}} U_{2j-1,2j}^l\right)(U_{\eta,1}^l)^\Delta \left(\prod_{j = 1}^{\frac{\eta-2}{2}} U_{2j,2j+1}^l\right)\,.\nonumber
\end{equation}
where all the local gates are  sampled i.i.d from the Haar measure over $\mathbb{U}(d^{2m})$. Then, for open boundary conditions ($\Delta=0$) the spectral gap is
    \begin{equation}\label{eq:lambda_max_open_th}
    \lambda_{d,{\rm open}}^{(n,m)}= \left(\frac{d^m}{d^{2m}+1}\right)^2\left(2+2\cos\left(\frac{2m\pi}{n}\right)\right)\,,
\end{equation}
while for closed boundary conditions ($\Delta=1$) it is
\begin{align}\label{eq:lambda_max_closed_th}
    \lambda_{d,{\rm closed}}^{(n,m)}&= \left(\frac{d^m}{d^{2m}+1}\right)^4\left(2+2\cos\left(\frac{2m\pi}{n}\right)\right)^2\\&=\left(\lambda_{d,{\rm open}}^{(n,m)}\right)^2\,.\nonumber
\end{align}
\end{theorem}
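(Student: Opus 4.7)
The plan is to reduce the spectral problem for $\TC_{\EC_1}^{(2)}$ to a tractable finite-dimensional one via the Weingarten commutant, and then diagonalize exactly via an effective one-dimensional hopping model.

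\textbf{Commutant reduction and local structure.} Each pair moment operator $\TC^{(2)}_{\mathbb{U}(d^{2m})}$ in Eq.~\eqref{eq:layer-moment} has image in the two-dimensional commutant $\mathrm{span}\{|I\rangle\rangle,|\mathbb{S}\rangle\rangle\}$ of the pair. Hence the full layer operator has image in the $2^\eta$-dimensional subspace $V_\eta:=\bigotimes_{i=1}^\eta\mathrm{span}\{|I_i\rangle\rangle,|\mathbb{S}_i\rangle\rangle\}$, and its nonzero spectrum coincides with that of its restriction to $V_\eta$. Diagonalizing the $2\times 2$ per-block Gram matrix yields the orthonormal basis $|\pm\rangle_i \propto |I_i\rangle\rangle\pm|\mathbb{S}_i\rangle\rangle$, identifying $V_\eta\cong(\mathbb{C}^2)^{\otimes\eta}$. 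A direct Weingarten computation then shows that in this basis each pair projector is Hermitian, rank two, and block-diagonal in the link parity $Z_jZ_{j+1}$: it projects onto $|v_1\rangle\propto (d^m+1)|++\rangle+(d^m-1)|--\rangle$ when $Z_jZ_{j+1}=+1$, and onto the Bell-like state $|v_2\rangle=(|+-\rangle+|-+\rangle)/\sqrt 2$ when $Z_jZ_{j+1}=-1$, introducing the key amplitude $c:=d^m/(d^{2m}+1)$ that appears in Eqs.~\eqref{eq:lambda_max_open_th}--\eqref{eq:lambda_max_closed_th}.

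\textbf{Symmetries and single-domain-wall sector.} The layer operator commutes with the global spin-flip $\prod_iZ_i$ (inherited from the $|I\rangle\rangle\leftrightarrow|\mathbb{S}\rangle\rangle$ symmetry of the permutation algebra) and with translation by two blocks (closed BC) or reflection about the center (open BC); its eigenvalue-one subspace is precisely $\mathrm{span}\{|I\rangle\rangle^{\otimes\eta},|\mathbb{S}\rangle\rangle^{\otimes\eta}\}$, matching the $\mathbb{U}(d^n)$-commutant. Direct computation for small $\eta$ (e.g.\ $\eta=4$ gives eigenvalues $1,1,2c^2,2c^2$ on $\mathrm{im}\,P_\mathrm{odd}$, matching $4c^2\cos^2(\pi/4)=2c^2$) combined with an inductive decoupling in the number of domain walls shows that the spectral-gap eigenvector lies in the single-domain-wall sector of states $|I\rangle\rangle^{\otimes k}|\mathbb{S}\rangle\rangle^{\otimes(\eta-k)}$ and their symmetry images. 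Restricted to this sector, $\TC_{\EC_1}^{(2)}$ acts as a uniform tridiagonal hopping matrix on $\eta-1$ (open BC) or $\eta$ (closed BC) wall positions with amplitude set by $c$.

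\textbf{Diagonalization and squaring identity.} Standard sine-mode diagonalization (open BC, reflecting boundary) or plane-wave diagonalization (closed BC, periodic boundary) yields a single-particle cosine dispersion whose maximum over the nontrivial modes is attained at the smallest admissible momentum $k_1=m\pi/n=\pi/\eta$. Squaring, consistent with $\TC_{\EC_1}^{(2)}$ being a product of two commuting-within-sublayer orthogonal projectors (so its eigenvalues are squared cosines of principal angles), yields $\lambda_{d,\mathrm{open}}^{(n,m)}=(2c\cos k_1)^2=c^2(2+2\cos(2m\pi/n))$ and hence Eq.~\eqref{eq:lambda_max_open_th}. For closed BC, the wrap-around projector $P^{(2)}_{G^l_{\eta,1}}$ acts on the single-wall eigenvector as a second copy of the effective hopping, producing another square and Eq.~\eqref{eq:lambda_max_closed_th} along with $\lambda_{d,\mathrm{closed}}^{(n,m)}=(\lambda_{d,\mathrm{open}}^{(n,m)})^2$. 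The principal obstacle is rigorously establishing the sector reduction in Step~2; the cleanest route is likely a Jordan--Wigner mapping turning the pair projectors into quadratic fermion operators (making $\TC_{\EC_1}^{(2)}$ effectively Gaussian, so the many-body spectrum factorizes into single-particle contributions), though one could alternatively invoke a frustration-free/Perron--Frobenius monotonicity argument exploiting the individual projectors' positivity-preserving structure.
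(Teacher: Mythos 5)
Your overall strategy mirrors the paper's: reduce to the $2^{\eta}$-dimensional commutant space, organize states by the number of domain walls (the paper calls them ``switches''), compute the restriction to the minimal-wall sector as a tridiagonal Toeplitz matrix, and read off $\lambda^{\max}=\left(\tfrac{d^m}{d^{2m}+1}\right)^2\left(2+2\cos(2\pi/\eta)\right)$, with the closed-boundary block being the square of the open-boundary one. Your small-$\eta$ checks and the final formulas are consistent with Eqs.~\eqref{eq:lambda_max_open_th} and~\eqref{eq:lambda_max_closed_th}.

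However, there is a genuine gap, and you have correctly identified it yourself: nothing in the proposal actually establishes that the spectral gap is attained in the single-domain-wall sector rather than in a sector with more walls. ``An inductive decoupling in the number of domain walls'' is asserted but not carried out, and the two escape routes you sketch are both problematic. The Jordan--Wigner/Gaussianity route is unsubstantiated: the local projector is a rank-two map on a four-dimensional space, and even though its even- and odd-parity blocks both have vanishing determinant, degenerate (non-invertible) matchgates do not automatically yield a free-fermion spectrum for the layered product, so the claim that the many-body spectrum ``factorizes into single-particle contributions'' needs proof. A Perron--Frobenius argument also does not immediately apply, because the relevant matrix is block upper triangular with respect to the wall number, not irreducible. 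The paper closes exactly this hole with two ingredients you are missing: (i) Lemma~\ref{lemma:2}, showing $\Lambda$ never increases the number of switches, so $\Lambda$ is block upper triangular and its spectrum is the union of the spectra of the diagonal blocks $B_\zeta$; and (ii) Lemma~\ref{lemma:3}, a row-sum (Gershgorin-type) bound $\lambda_{\zeta,i}\leq\left(\tfrac{2d^m}{d^{2m}+1}\right)^{2^{\zeta}}$, which decays doubly exponentially in $\zeta$ and is then compared directly against the exact largest eigenvalue of $B_1$ (with the finitely many exceptional cases, e.g.\ $m=1,\eta=4$, checked by hand). You also omit the parity constraint (the paper's Lemma~\ref{lemma:1}) that only even-length blocks of $i$'s and $s$'s survive, which is why the one-wall sector has $(\eta-2)/2$ states per symmetry class rather than the $\eta-1$ wall positions you quote; your final number survives only because $4\cos^2(\pi/\eta)=2+2\cos(2\pi/\eta)$. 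As written, the proposal is an incomplete proof of the theorem: the diagonalization step is right, but the reduction that makes it sufficient is not established.
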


As we now discuss, Theorem~\ref{th:theorem_1} has several important implications. First, we can see that the spectral gap for closed boundary conditions is exactly the square of that for open boundary conditions. Next, from Eqs.~\eqref{eq:lambda_max_open_th} and~\eqref{eq:lambda_max_closed_th} we can quantify the number of layers needed for the circuit to be an approximate $2$-design over $\mathbb{G}=\mathbb{U}(d^n)$ as follows.
\begin{corollary}\label{cor:corollary_1}
    The circuit $U$ will become an $\varepsilon$-approximate $2$-design for open boundary conditions when the number of layers is
\begin{equation}\label{eq:L-closed-cor}
    L\geq \frac{\left(n\log(d)+\log(\frac{1}{\varepsilon})\right)}{\left(2\log(\frac{d^{2m}+1}{d^m})-\log\left(2+2\cos\left(\frac{2m\pi}{n}\right)\right)\right)}\,.\nonumber
\end{equation}
\normalsize
while for closed boundary conditions when 
\begin{equation}\label{eq:L-closed-cor-closed}
    L\geq \frac{\left(n\log(d)+\log(\frac{1}{\varepsilon})\right)}{\left(4\log(\frac{d^{2m}+1}{d^m})-2\log\left(2+2\cos\left(\frac{2m\pi}{n}\right)\right)\right)}\,.\nonumber
\end{equation}
\normalsize
\end{corollary}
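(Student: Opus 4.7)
The plan is to derive Corollary~\ref{cor:corollary_1} as a direct algebraic consequence of Theorem~\ref{th:theorem_1}, using only Definition~\ref{def:design} and the identity $|\AC_L^{(2)}|_\infty=(\lambda_d^{(n,m)})^L$ from Eq.~\eqref{eq:spectral-gap}. No further structural input about the circuit or Weingarten calculus is needed at this stage; the corollary is a quantitative rewriting of the spectral gap bound as a mixing time bound.

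First, I would combine Definition~\ref{def:design} with Eq.~\eqref{eq:spectral-gap} to state that $\EC_L$ is an $\varepsilon$-approximate $2$-design as soon as
\begin{equation}
    \left(\lambda_d^{(n,m)}\right)^L \;\leq\; \frac{\varepsilon}{d^n}\,.
\end{equation}
Since $\lambda_d^{(n,m)}\in(0,1)$ in both boundary cases (which follows from Theorem~\ref{th:theorem_1}, as $\left(\tfrac{d^m}{d^{2m}+1}\right)^2 (2+2\cos(2m\pi/n))<1$ for all $d\geq 2$, $m\geq 1$), taking logarithms and dividing by the negative quantity $\log\lambda_d^{(n,m)}$ flips the inequality, yielding
\begin{equation}
    L \;\geq\; \frac{n\log(d)+\log(1/\varepsilon)}{\log\!\left(1/\lambda_d^{(n,m)}\right)}\,.
\end{equation}

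Next, I would substitute the explicit forms from Theorem~\ref{th:theorem_1}. For open boundary conditions,
\begin{equation}
    \log\!\left(1/\lambda_{d,{\rm open}}^{(n,m)}\right) = 2\log\!\left(\tfrac{d^{2m}+1}{d^m}\right) - \log\!\left(2+2\cos(\tfrac{2m\pi}{n})\right),
\end{equation}
which, inserted into the threshold above, gives exactly Eq.~\eqref{eq:L-closed-cor}. For closed boundary conditions, Theorem~\ref{th:theorem_1} asserts $\lambda_{d,{\rm closed}}^{(n,m)}=(\lambda_{d,{\rm open}}^{(n,m)})^2$, so $\log(1/\lambda_{d,{\rm closed}}^{(n,m)})=2\log(1/\lambda_{d,{\rm open}}^{(n,m)})$, which immediately doubles the denominator and reproduces Eq.~\eqref{eq:L-closed-cor-closed}.

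There is essentially no obstacle here beyond bookkeeping: the only point that requires a brief sanity check is the sign of $\log\lambda_d^{(n,m)}$ and, relatedly, the positivity of the denominators in the stated bounds, i.e.\ that $2\log\!\left(\tfrac{d^{2m}+1}{d^m}\right)>\log(2+2\cos(2m\pi/n))$. This positivity reduces to $(d^{2m}+1)^2 > d^{2m}(2+2\cos(2m\pi/n))$, which follows from $(d^{2m}+1)^2 \geq (d^{2m}+1)^2 \geq 4d^{2m} \geq d^{2m}(2+2\cos(2m\pi/n))$ via the AM--GM inequality $d^{2m}+1\geq 2 d^m$ combined with $2+2\cos\theta\leq 4$. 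With that verified, the chain of inequalities is valid and Corollary~\ref{cor:corollary_1} follows.
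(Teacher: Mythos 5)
Your derivation is correct and is exactly the (implicit) argument the paper relies on: combine Definition~\ref{def:design} with Eq.~\eqref{eq:spectral-gap} to obtain $(\lambda_d^{(n,m)})^L\leq\varepsilon/d^n$, take logarithms using $\log\lambda_d^{(n,m)}<0$, and substitute the explicit gaps from Theorem~\ref{th:theorem_1} (with the closed-boundary denominator doubling because that gap is the square of the open one). The only blemish is a duplicated term in your AM--GM chain for the positivity of the denominator; the intended strict inequality $(d^{2m}+1)^2>4d^{2m}\geq d^{2m}\left(2+2\cos\left(\frac{2m\pi}{n}\right)\right)$ for $d\geq 2$ is correct.
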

Hence, in the large $n$ limit Corollary~\eqref{cor:corollary_1} implies 
\begin{equation}\label{eq:bound-ours}
    L\geq \frac{1}{C}\left(n\log(d)+\log(\frac{1}{\varepsilon})\right)\,,
\end{equation}
where $C=2\log(\frac{d^{2m}+1}{2d^m})$ for open boundary conditions and $C=4\log(\frac{d^{2m}+1}{2d^m})$ for closed boundaries. That is, when we have closed boundary conditions, we need half of the layers for $U$ to become an $\varepsilon$-approximate $2$-design.

Importantly, for the case of $m=1$ we can compare the bound in Eq.~\eqref{eq:bound-ours} with that of Ref.~\cite{hunter2019unitary}, which states that for open boundaries the circuit is an $\varepsilon$-approximate $2$-design if 
\begin{equation}\label{eq:bound-hunter}
    L\geq \frac{1}{\widetilde{C}}\left(2n\log(d)+\log(n)+\log(\frac{1}{\varepsilon})\right)\,,
\end{equation}
 where $\widetilde{C}=\log(\frac{d^{2m}+1}{2d^m})$. Hence, comparing Eqs.~\eqref{eq:bound-ours} and~\eqref{eq:bound-hunter} shows that our bound drops a $\log(n)$ term in the numerator and improves on the constants. In particular, our results improve the linear term from $\approx 6.2 n$ to $\approx1.55 n$ when $d=2$ (and to $\approx0.775 n$ in closed boundary conditions). Similarly, our result improves the asymptotic convergence from $2n$ to $n/2$ when $d\rightarrow \infty$ (and to $n/4$ in closed boundary conditions).  

Here it is worth recalling that while $L$ quantifies the number of layers, this does not correspond to the number of gates in the circuit. Indeed, one still needs to account for the number of gates $N$ needed to make each  gate $G_{j,j'}^{l}$ be Haar random over $\mathbb{U}(d^{2m})$. Using known constructions for approximate $2$-designs~\cite{chen2024incompressibility} or pseudorandom unitaries~\cite{metger2024simple,schuster2024random} one can build such local unitaries  with $\poly(m)$ gates (we will assume quadratic scaling~\cite{metger2024simple}). With this result in mind, we can now use Eq.~\eqref{eq:bound-ours} to compare the total number of gates depth needed for $U$ to be an approximate $2$-design as a function of $m$. In particular, one can find that there is a positive trade-off  when $m>1$, and even when it is taken to be  a function of  $n$. For instance, let us compare the cases when $m=1$ and $m=\log(n)$. Here,  the associated total number of gates, denoted as  $N_1$ and $N_{\log(n)}$, respectively, satisfy the asymptotic ratio (valid for both open and closed boundary conditions)
\begin{equation}
    \frac{N_1}{N_{\log(n)}}\xrightarrow[n\rightarrow \infty]{}\frac{\log(2)}{\log(\frac{5}{4})}\approx 3.1\,,
\end{equation}
which indicates that grouping qudits into sets of $\log(n)$ leads to a more favorable asymptotic scaling.

\begin{figure*}[t]
    \centering
    \includegraphics[width=1\linewidth]{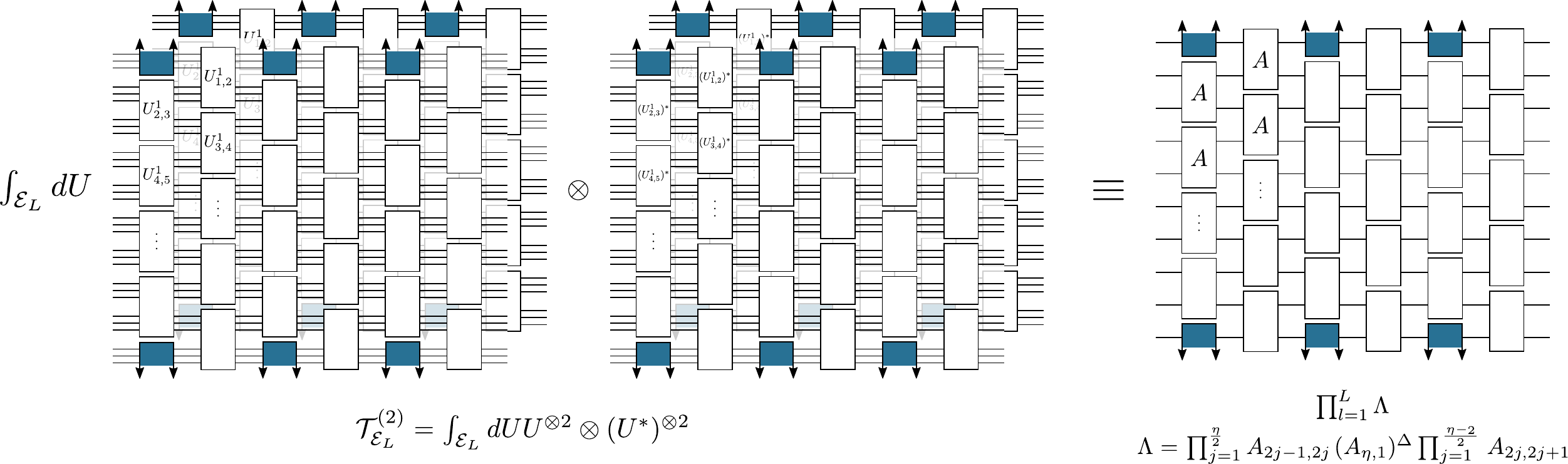}
    \caption{\textbf{Schematic representation of the second moment operator.} After the reduction of Eq.~\eqref{eq:layer-moment}, the action of $\TC_{\EC_L}^{(2)}$ can be studied as a $2^\eta\times 2^\eta$ operator where $A$ matrices act on two two-dimensional spaces following the same topology as that in $U$.}
    \label{fig:redution}
\end{figure*}

\section{Proof of  Theorem~\ref{th:theorem_1} and explicit eigenvector construction}

In what follows, we present the proof of  Theorem~\ref{th:theorem_1}. While in general we tend to present proofs in the appendices, we believe our construction possesses useful insights that are worth highlighting in the main text. In particular, the notation introduced here will be important to understand the form of the eigenvectors of $\TC_{\EC_1}^{(2)}$ associated with the eigenvalues in Eqs.~\eqref{eq:lambda_max_open_th} and~\eqref{eq:lambda_max_closed_th}. As we will see, our proof technique consists of a series of dimensionality reductions and simplifications.

We start by finding the vectorized moment operator over a single local random gate $\TC_{G_{j,j'}^{l}}^{(2)}$ for the case when $G_{j,j'}^l=\mathbb{U}(d^{2m})$ for all $j, j',l$. We refer the reader to ~\cite{dalzell2021random,braccia2024computing} for additional details on this first step. To begin, we note that $\TC_{\mathbb{U}(d^{2m})}^{(2)}$ acts on $\HC_A\otimes \HC_B$, where  $\HC_A=\HC_{A,1}\otimes \HC_{A,2}$ and $\HC_{A,1}=\HC_{A,2}=(\mathbb{C}^{d})^{\otimes m}$ (a similar expression holds for $\HC_B$). For convenience, we will re-write this Hilbert space as $\HC_{A,1}\otimes\HC_{B,1}\otimes \HC_{A,2}\otimes\HC_{B,2}$. Then, since $\TC_{\mathbb{U}(d^{2m})}^{(2)}$ is a moment operator of a unitary sampled over the Haar measure over $\mathbb{U}(d^{2m})$  it projects into the second order commutant  ${\rm comm}^{(2)}(\mathbb{U}(d^{2m})) = {\rm span}_{\CC} \{ \id_1 \otimes \id_2, {\rm SWAP}_1\otimes {\rm SWAP}_2  \}$, where $\id_i$ denotes the identity over $\HC_{A,i}\otimes\HC_{B,i}$ and ${\rm SWAP}_i$ denotes the swap  operator such that  ${\rm SWAP}\ket{\psi}\otimes \ket{\phi}= \ket{\phi}\otimes\ket{\psi}$ for $\ket{\psi},\ket{\phi}\in (\mathbb{C}^{d})^{\otimes m}$. From the previous, we can express the non-trivial action of $\TC_{\mathbb{U}(d^{2m})}^{(2)}$ on the vector space spanned by the basis  $\{ \ket{i}, \ket{s} \}^{\otimes 2}$ obtained from the vectorization
\begin{equation}
    \ket{i} \equiv |\id\rangle\rangle\,, \quad \ket{s} \equiv |{\rm SWAP}\rangle\rangle\,.
\end{equation}

Then, from the definition of the second moment operator we find that its action on this four-dimensional basis is given by~\cite{braccia2024computing,garcia2023deep}
\begin{align}
    \TC_{\mathbb{U}(d^{2m})}^{(2)} \ket{ii} &= \ket{ii} \,,\nonumber\\
    \TC_{\mathbb{U}(d^{2m})}^{(2)} \ket{is} &= \frac{d^m}{d^{2m}+1} \ket{ii} + \frac{d^m}{d^{2m}+1} \ket{ss} \,,\nonumber\\
    \TC_{\mathbb{U}(d^{2m})}^{(2)} \ket{si} &= \frac{d^m}{d^{2m}+1} \ket{ii} + \frac{d^m}{d^{2m}+1} \ket{ss}\,,\nonumber\\
    \TC_{\mathbb{U}(d^{2m})}^{(2)} \ket{ss} &= \ket{ss} \,.\nonumber
\end{align}
Therefore, we can represent the action of the local second moment operator, $\TC_{\mathbb{U}(d^{2m})}^{(2)}$, as the $4 \times 4$ matrix $A$
\begin{equation}\label{eq:matrix-P} A = \begin{pmatrix}
        1 & \frac{d^m}{d^{2m}+1} & \frac{d^m}{d^{2m}+1} & 0\\
        0 & 0 & 0 & 0 \\
        0 & 0 & 0 & 0 \\
        0 & \frac{d^m}{d^{2m}+1} & \frac{d^m}{d^{2m}+1} & 1
    \end{pmatrix}\,,
\end{equation}
that acts on $(\mathbb{R}^{2})^{\otimes 2}$ spanned by $\{ \ket{i}, \ket{s} \}^{\otimes 2}$. Indeed, one  can easily verify that this matrix is a projector since $A^2 = A$, as expected.

From here, we can use Eq.~\eqref{eq:layer-moment} to represent the single-layer moment operator $\TC_{\EC_1}^{(2)}$ by the $2^\eta\times 2^\eta$ operator 
\begin{equation}\label{eq:layer-moment-lambda}
 \Lambda=\left(\prod_{j = 1}^{\frac{\eta}{2}} A_{2j-1,2j}\right)(A_{\eta,1})^\Delta  \left(\prod_{j = 1}^{\frac{\eta-2}{2}} A_{2j,2j+1} \right)\,.
\end{equation}
Above, $A_{j,j'}$ denotes that an $A$ matrix acts on the $j$-th and $j'$-th dimensional copies of $\mathbb{R}^{2}$. Then, as shown in Fig.~\ref{fig:redution}, $\TC_{\EC_L}^{(2)}$ can be represented by $\prod_{l=1}^L\Lambda = \Lambda^L$, which can be visualized  as a product of $A$ matrices that satisfy the same topology as that in $U$.

Before proceeding to the next section, let us highlight the fact that $\Lambda$ acts on the vector space $(\mathbb{R}^{2})^{\otimes \eta}$ with basis $\{ \ket{i}, \ket{s} \}^{\otimes \eta}$,  which we henceforth represent in the form $|\alpha_1 \alpha_2 \alpha_3 \dots \alpha_{\eta} \rangle$ where $\alpha_j \in \{ i, s \} \quad \forall{j}$.  For convenience we can also write them as $\ket{\alpha_0}^{\otimes k_0} \otimes \ket{\alpha_1}^{\otimes k_1} \otimes \dots \otimes \ket{\alpha_{\zeta}}^{\otimes k_\zeta}$ where $\sum_{j = 0}^\zeta k_j = \eta$. This divides a state into blocks of $i$'s and $s$'s. Here we define $\zeta$ as the number of ``\textit{switches}'' between neighboring blocks of $i$ and $s$.

From the  previous we see that  evaluating the spectral gap is equivalent to  computing the largest non-one eigenvalue of $\Lambda$, which we now proceed to do for both open and closed boundary conditions.

\subsubsection{Open boundary conditions}

Starting from Eq.~\eqref{eq:layer-moment-lambda}, and taking  $\Delta = 0$, we find that we can write
\begin{equation}\label{eq:lambda-simple}
    \Lambda=A^{\otimes \frac{\eta}{2}}\left(\id_2 \otimes A^{\otimes \frac{\eta - 2}{2}} \otimes \id_2\right)\,,
\end{equation}
where $\id_2$ is a $2\times 2$ identity matrix. Then, we can show that the following two important Lemmas hold (see the appendix for a proof).

\begin{lemma}\label{lemma:1}
    An eigenvector of $\Lambda$ must also be an eigenvector of $A^{\otimes \frac{\eta}{2}}$ and thus can only be written in terms of basis vectors $\ket{\alpha_0}^{\otimes k_0} \otimes \ket{\alpha_1}^{\otimes k_1} \otimes \dots \otimes \ket{\alpha_\zeta}^{\otimes k_\zeta}$ such that $k_j$ is even $\forall j$.
\end{lemma}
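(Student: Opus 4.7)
My plan is to exploit two structural facts: $A$ in Eq.~\eqref{eq:matrix-P} is a rank-two projector, and $\Lambda$ in Eq.~\eqref{eq:lambda-simple} factors as $\Lambda = A^{\otimes \eta/2}\, M$ with $M = \id_2 \otimes A^{\otimes (\eta-2)/2} \otimes \id_2$. A direct check on the basis $\{\ket{ii},\ket{is},\ket{si},\ket{ss}\}$ shows $A^2 = A$ and that the $+1$ eigenspace of $A$ is exactly $\mathrm{span}\{\ket{ii}, \ket{ss}\}$, with $\mathrm{span}\{\ket{is}-\ket{si},\,\ket{is}+\ket{si}-2c(\ket{ii}+\ket{ss})\}$ (where $c = d^m/(d^{2m}+1)$) forming its kernel. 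By functoriality of the tensor product, $A^{\otimes \eta/2}$ is then a projector whose $+1$ eigenspace is spanned by all basis vectors of the form $\ket{\alpha_1 \alpha_1 \alpha_3 \alpha_3 \cdots \alpha_{\eta-1} \alpha_{\eta-1}}$ with $\alpha_{2j-1} \in \{i,s\}$, i.e.\ exactly those strings whose entries at positions $2j{-}1$ and $2j$ agree for every $j$.

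The core step is a one-line projector argument. Let $\Lambda v = \lambda v$ with $\lambda \neq 0$ (the $\lambda=0$ eigenvectors do not affect the spectral gap and may be handled by choosing a basis of $\ker \Lambda$ separately). Then $\lambda v = A^{\otimes \eta/2}(Mv) \in \mathrm{Im}(A^{\otimes \eta/2})$. Applying $A^{\otimes \eta/2}$ to both sides and using idempotence yields $\lambda A^{\otimes \eta/2} v = (A^{\otimes \eta/2})^2 M v = A^{\otimes \eta/2} M v = \Lambda v = \lambda v$, and dividing through by $\lambda$ gives $A^{\otimes \eta/2} v = v$. Hence $v$ lies in the $+1$ eigenspace of $A^{\otimes \eta/2}$, proving the first half of the lemma.

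To extract the block-length conclusion, I translate the pairwise constraint $\alpha_{2j-1} = \alpha_{2j}$ into a statement about maximal runs of identical letters in the string $\alpha_1 \alpha_2 \cdots \alpha_\eta$. The constraint forbids any transition between $i$ and $s$ inside a pair $(2j-1, 2j)$, so transitions can only occur at the boundary between the even position $2j$ and the next odd position $2j+1$. Consequently each maximal block consists of a whole number of pairs and has even length, giving precisely the claimed form $\ket{\alpha_0}^{\otimes k_0} \otimes \ket{\alpha_1}^{\otimes k_1} \otimes \cdots \otimes \ket{\alpha_\zeta}^{\otimes k_\zeta}$ with every $k_j$ even.

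I do not anticipate any serious obstacle. The only subtle point is the role of $\ker \Lambda$: the projector argument uses $\lambda \neq 0$ in order to divide through, so the strict form of the implication applies away from $\ker \Lambda$. This is harmless for the downstream use of the lemma, since by definition the spectral gap is the largest eigenvalue in $(0,1)$, which is attained only on eigenvectors of the type described.
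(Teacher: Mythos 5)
Your proposal is correct and follows essentially the same route as the paper's own proof: both exploit the factorization $\Lambda = A^{\otimes \eta/2}(\id_2 \otimes A^{\otimes(\eta-2)/2}\otimes\id_2)$ together with idempotence of $A^{\otimes \eta/2}$ to conclude that any eigenvector with nonzero eigenvalue lies in the $+1$ eigenspace of $A^{\otimes \eta/2}$, and then read off the even-block structure from the pairwise constraint $\alpha_{2j-1}=\alpha_{2j}$. Your explicit handling of the $\lambda\neq 0$ caveat matches the paper's remark that only nonzero eigenvalues are relevant.
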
 

\begin{lemma}\label{lemma:2}
If a vector is a linear combination of basis terms with $\zeta$, or less, switches then the action on this vector by $\Lambda$ will result in a linear combination of basis states with $\zeta$, or less, switches. 
\end{lemma}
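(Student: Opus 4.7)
The plan is to reduce the statement to a purely local claim about the single pair matrix $A$ and then compose across the layers of $\Lambda$. By linearity it suffices to prove the statement for a single basis state $\ket{\vec\alpha}$ with switch count $\zeta(\vec\alpha)=\zeta$. Moreover, since $\Lambda$ is a product of two sub-layers of $A$ matrices acting on disjoint pairs (plus one extra wrap-around matrix $A_{\eta,1}$ when $\Delta=1$), it is in turn enough to prove: if a single $A$ is applied at positions $(j,j+1)$ of a basis state, every basis term appearing in the expansion of the resulting vector has switch count no larger than that of the input.

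For this local step, I would read the action of $A$ directly off Eq.~\eqref{eq:matrix-P}. The matched pairs $\ket{ii}$ and $\ket{ss}$ are left invariant, while the mismatched pairs $\ket{is}$ and $\ket{si}$ are both sent to the symmetric combination of $\ket{ii}$ and $\ket{ss}$. The decisive structural observation is that $A$ never outputs a mismatched pair at positions $(j,j+1)$; hence whenever the input pair was mismatched, the ``internal'' switch between $j$ and $j+1$ is destroyed on every output branch. What remains is to track how the two ``external'' switches at $(j-1,j)$ and $(j+1,j+2)$ can change. A short case analysis over $(\alpha_{j-1},\alpha_{j+2})\in\{i,s\}^2$ together with the two output branches $\ket{ii}$ and $\ket{ss}$ shows that, in every scenario, the number of newly created external switches never exceeds the single internal switch that was removed, so the total switch count weakly decreases (and in some branches strictly decreases by two).

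Composition then finishes the argument. Within each sub-layer the $A$ matrices act on disjoint pairs, so the single-gate bound can be applied independently to every factor and the branches summed; iterating over the two sub-layers (and the extra wrap-around $A_{\eta,1}$ in the closed case, whose analysis is identical up to periodic indexing) propagates the monotone non-increase of switches through the full operator $\Lambda$. Finally, linearity extends the conclusion from basis states to arbitrary linear combinations supported on $\leq\zeta$ switches. The main obstacle I anticipate is purely bookkeeping: the edge sites $1$ and $\eta$ in the open case have one fewer neighbor and must be checked separately, and in the closed case one has to keep track of the periodic switch between $\eta$ and $1$. Neither complication alters the logic, but both must appear in an exhaustive case analysis for the argument to be airtight.
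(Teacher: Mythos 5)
Your argument is correct, and it rests on the same underlying observation as the paper's proof --- that $A$ fixes the matched pairs $\ket{ii},\ket{ss}$ and sends each mismatched pair to matched ones, so switches can only be destroyed or displaced, never created --- but you organize it differently. The paper works switch-by-switch: using Lemma~\ref{lemma:1} it places each switch between an even and an odd site, so that exactly one gate of the inner sub-layer straddles it, and then tracks the full four-site window $\ket{iiss}$ through both sub-layers at once, concluding that each switch is either preserved or annihilated. You instead work gate-by-gate: you prove that a \emph{single} $A$ applied to any pair sends every basis state to output branches of weakly smaller switch count (the internal switch is removed and at most one external switch is created per branch, which one can make quantitative: the change is $-2e_L$ or $-2e_R$ with $e_{L,R}\in\{0,1\}$ the indicator of a pre-existing external switch), and then compose over the commuting gates of each sub-layer. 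Your decomposition buys three things: it does not invoke Lemma~\ref{lemma:1} at all, so the monotonicity of the switch count holds on the full basis rather than only on the even-block subspace; it handles adjacent switches (blocks of length two, where the paper's four-site windows overlap) without any extra care; and the closed-boundary gate $A_{\eta,1}$ and the edge sites are covered by the identical one-gate lemma with only index bookkeeping. The paper's version, in exchange, directly yields the quantitative three-term recursion $(k_0,k_1)\mapsto(k_0\pm 2,k_1\mp 2)$ that is reused later to build $B_1$, which your qualitative bound does not. Do make sure the final write-up actually contains the promised case analysis over $(\alpha_{j-1},\alpha_{j+2})$ and the two output branches; as stated it is a proof sketch, but every case closes.
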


First, let us note that Lemma~\ref{lemma:1} implies that the eigenvectors of $\Lambda$ can only have support in a subspace of $(\mathbb{R}^2)^{\otimes \eta}$ containing basis states composed of blocks of even number of $i$'s and $s$'s. Thus, we will henceforth only consider the action of $\Lambda$ in such a subspace. Next, Lemma~\ref{lemma:2} shows that the action of $\Lambda$ cannot increase the number of switches between $i$'s and $s$'s. These results allow us to reorder the rows and columns of $\Lambda$ according to the number of switches they have. After such reordering, we can express the action of $\Lambda$ as an upper triangular matrix of the form
\begin{equation}\label{eq:matrix}
    \Lambda \equiv\begin{pmatrix}
  \fbox{$\begin{matrix} 1 & 0\\ 0 & 1  \end{matrix}$}  &  \cdots   & \cdots& \cdots   \\
    {\textrm{\Large 0}}  & \fbox{$\begin{matrix} B_1 \end{matrix}$} &  \cdots & \cdots \\
     {\textrm{\Large 0}}  &   {\textrm{\Large 0}}   &  \fbox{$\begin{matrix}  & & &\\ & B_2 &\\
   & & &\end{matrix}$}  & \cdots\\
          \textrm{\Large 0}     &   {\textrm{\Large 0}}   & {\textrm{\Large 0}} &   \ddots         & 
  \end{pmatrix}\,.
\end{equation}
Here we have defined the square matrices $B_\zeta$ as the submatrices of $\Lambda$ that act on basis states containing exactly $\zeta$-switches. Indeed,  the fact that the lower half of this representation of  $\Lambda$ is zero  follows from  Lemma~\ref{lemma:2}: $\Lambda$ does not increase the number of  switches. 

Next, we will use the well known property stating that the eigenvalues of an upper triangular block-matrix are the eigenvalues of its diagonal blocks, meaning that the eigenvalues of $\Lambda$ are precisely those of the  $B_\zeta$. Here, we can directly visualize from Eq.~\eqref{eq:matrix} that $\Lambda$ contains two eigenvalues equal to one in the $B_0$ block, whose associated eigenvectors are the  states $\ket{i}^{\otimes \eta}$ and $\ket{s}^{\otimes \eta}$. In fact, one can verify that $\SC_{\mathbb{U}(d^n)}(1)=\SC_{\EC_1}(1)$, as these are the only two eigenvectors with eigenvalue equal to one for $\TC_{\mathbb{U}(d^n)}^{(2)}$~\cite{braccia2024computing}. 

The previous realization shows that the spectral gap will be the largest eigenvalue of the $B_\zeta$ sub-matrices with $\zeta\geq1$. In what follows, we will show that such an eigenvalue is contained in $B_1$. Our proof procedure contains two steps: (1) Compute an upper bound for the eigenvalues of $B_\zeta$; (2) Explicitly evaluate the largest eigenvalue of $B_1$; (3) Show that such an eigenvalue of $B_1$ is larger than the upper bounds obtained for the eigenvalues of the $B_\zeta$ with $\zeta\geq2$, and thus that it corresponds to the spectral gap. The first step can be accomplished by the following lemma, proved in the appendix. 
\begin{lemma}\label{lemma:3}
    Let $\{\lambda_{\zeta,i} \}$ be the  eigenvalues of $B_\zeta$. Then, the following  upper  bound holds for all $i$, and $\zeta \geq 1$
\begin{equation}\label{eq:bounds}
    \lambda_{\zeta,i}\leq \left(\frac{2d^m}{d^{2m}+1}\right)^{2\zeta} \,.
\end{equation}
\end{lemma}

Next, to obtain the eigenvalues of $B_1$ we will explicitly construct this matrix by analyzing the action of $\Lambda$ on basis states with exactly one switch. To simplify the notation let us define the zero- and one-switch states as 
\begin{equation}
\begin{split}
        \ket{k}_+&=\ket{i}^{\otimes k} \otimes \ket{s}^{\otimes \eta - k}\,,\\
    \ket{k}_-&=\ket{s}^{\otimes \eta - k} \otimes \ket{i}^{\otimes k}\,,
\end{split}
\end{equation}
where we further assume that $k$ is even as per Lemma~\ref{lemma:1}. As such, we note that the bases $\{\ket{k}_\pm\}_{k=0,2,4,\ldots,\eta }$ are each of size $\frac{\eta}{2}+1$ (where the ``$+1$'' comes from the fact that we start counting at zero). 
Explicitly using Eqs.~\eqref{eq:matrix-P} and~\eqref{eq:lambda-simple} allows us to find that the action of $\Lambda$ over these states  is (see the appendix for a proof)
\begin{align}
    &\Lambda\ket{k}_{\pm} = \left(\frac{d^m}{d^{2m}+1}\right)^{2}\!(\ket{k +2}_{\pm} + 2\ket{k}_{\pm}) + \ket{k - 2}_{\pm})\,,\nonumber\\
    &\Lambda\ket{0} = \ket{0}\,,\label{eq:Lambda-1-switch}\\
    &\Lambda\ket{\eta} = \ket{\eta}\,.   \nonumber
\end{align}

Equation~\eqref{eq:Lambda-1-switch} shows that the action of $\Lambda$ does not mix the $\pm$ subspaces  spanned by $\{\ket{k}_\pm\}_{k=2,4,\ldots,\eta-2}$, indicating that $B_1$ has an internal block-diagonal structure of the form
\begin{equation}\label{eq:matrix-B1}
    B_1 =\begin{pmatrix}
  B_1^+  &  0   \\
  0 &   B_1^-  
  \end{pmatrix}\,,
\end{equation}
and thus that its eigenvalues will be two-fold degenerate. Importantly, we can use Eq.~\eqref{eq:Lambda-1-switch} to find that  $B_1^\pm$ are given by the  tridiagonal toeplitz matrix of size $\frac{\eta-2}{2} \times \frac{\eta-2}{2}$ 

\begin{equation}\label{eq:matrix-TT}
B_1^\pm = \left(\frac{d^m}{d^{2m}+1}\right)^{2}\begin{pmatrix}
2 & 1 & 0 & 0 & \dots & 0 \\
1 & 2 & 1 & 0 & \dots & 0 \\
0 & 1 & 2 & 1 & \dots & 0 \\
\vdots & & \ddots & \ddots & \ddots & \vdots\\
0 & \dots & 0 & 1 & 2 & 1\\
0 & \dots & 0 & 0 & 1 & 2\\
\end{pmatrix} \,.
\end{equation}

From here, we can use the fact that tridiagonal Toeplitz matrices have very unique and well-known properties, such as the analytical form of their eigenvalues and eigenvectors being exactly known~\cite{noschese2019eigenvector}. Thus, we we can find that the largest eigenvalue of $B_1$ is given by
\begin{align}\label{eq:lambda_max}
    \lambda^{{\rm max}}&= \left(\frac{d^m}{d^{2m}+1}\right)^{2}\left(2+2\cos\left(\frac{2\pi}{\eta}\right)\right)\nonumber\\
    &=\left(\frac{d^m}{d^{2m}+1}\right)^2\left(2+2\cos\left(\frac{2m\pi}{n}\right)\right)\,.
\end{align}
Furthermore we find that the two eigenvectors of $\Lambda$ (and thus of $\TC_{\EC_1}^{(2)}$) associated with $\lambda^{{\rm max}}$ are
\begin{align}\label{eq:x_max}
    \ket{\lambda_{\pm}^{\rm max}} =& \sin(\frac{2\pi}{\eta})\ket{2}_{\pm} + \sin(\frac{4\pi}{\eta})\ket{4}_{\pm} + \ldots\nonumber\\
    &+ \sin(\frac{(\eta-2)\pi}{\eta})\ket{\eta-2}_{\pm}\,.
\end{align}

Finally, let us compare Eqs.~\eqref{eq:bounds} and ~\eqref{eq:lambda_max}. We want to show that for $\zeta\geq 2$
\begin{equation}
  \lambda_{\zeta,i}\leq \left(\frac{2d^m}{d^{2m}+1}\right)^{2\zeta}  \leq \lambda^{{\rm max}}\,,
\end{equation}
meaning that the largest eigenvalue of $\Lambda$ is $\lambda^{{\rm max}}$. In the case of $m=1$ the inequality holds for $\eta\geq 6$. In the case of $m\geq 2$ the inequality holds for all $\eta\geq 4$. When these inequalities do not hold (i.e., for $m=1, \eta = 4$) we can explicitly construct and diagonalize the matrix in Eq.~\eqref{eq:matrix} and verify that the largest eigenvalue of $B_1$ is indeed the largest eigenvalue of $\Lambda$. Hence, we can find that the spectral gap for open boundary conditions is 
\begin{equation}
   \lambda_{d,{\rm open}}^{(n,m)}= \lambda^{{\rm max}}\,.
\end{equation}

\subsubsection{Closed boundary conditions}

Next, we consider the case of closed boundary conditions, i.e., $\Delta=1$ in Eqs.~\eqref{eq:layer} and~\eqref{eq:layer-moment-lambda}. A key difference here is that a term like $\ket{iiiiiiss}$ actually has 2 switches instead of 1 and is thus equivalent to terms $\ket{iiiissii}$, $\ket{iissiiii}$ and $\ket{ssiiiiii}$ based on the action of $\Lambda$ up to some cyclical permutation. In general, let $ C_{\frac{\eta}{2}}$ be the cyclical group of order $\frac{\eta}{2}$, given by the integer power of an element $\sigma$ such that the action of sigma is 
\begin{equation}
    \sigma |\alpha_1 \alpha_2 \alpha_3\alpha_4 \dots \alpha_\eta \rangle=|\alpha_{\eta-1} \alpha_\eta \alpha_1 \alpha_2 \dots \alpha_{\eta-2} \rangle\,.
\end{equation}
That is, $\sigma$ performs a two-site translation. Importantly, one can readily verify that the action of sigma commutes with that of $\Lambda$, i.e., $[\Lambda, \sigma]=0$. Due to this symmetry, basis vectors are uniquely determined by the number of $i$'s (or $s$'s) they possess, and not by their relative position. As such,  there is no need to distinguish between $\ket{k}_+$ and $\ket{k}_-$ states as they are related by a cyclical permutation. 

Similar to the case of open boundary conditions,  we find that the action of $\Lambda$  can be expressed as an upper triangular block-matrix on a basis labeled by the number of switches as  
\begin{equation}\label{eq:matrix-closed}
    \Lambda \equiv\begin{pmatrix}
  \fbox{$\begin{matrix} 1 & 0\\ 0 & 1  \end{matrix}$}  &  \cdots   & \cdots& \cdots   \\
    {\textrm{\Large 0}}  & \fbox{$\begin{matrix} B_2 \end{matrix}$} &  \cdots & \cdots \\
     {\textrm{\Large 0}}  &   {\textrm{\Large 0}}   &  \fbox{$\begin{matrix}  & & &\\ & B_4 &\\
   & & &\end{matrix}$}  & \cdots\\
          \textrm{\Large 0}     &   {\textrm{\Large 0}}   & {\textrm{\Large 0}} &   \ddots         & 
  \end{pmatrix}\,.
\end{equation}
Then, let us define the basis of zero- and two-switches
\begin{equation}\label{eq:closed-basis}
    \ket{k}' = \frac{1}{p} \sum_{p = 0}^{\eta/2} \sigma^p\left(\ket{i}^{\otimes k} \otimes \ket{s}^{\otimes \eta - k}\right)\,,
\end{equation}
where $k$ is even as per Lemma~\ref{lemma:1}. Note that again  the basis $\{\ket{k}'\}_{k=0,2,4,\ldots,\eta }$ is of size $\frac{\eta}{2}+1$. Now, the top left block is spanned by the vectors $\ket{0}'$ and $\ket{\eta}'$ (i.e., all $i$ and all $s$ states), whereas the $B_2$ block is spanned by $\{\ket{k}'\}_{k=2,4,\ldots,\eta-2 }$ and thus is of size $\frac{\eta-2}{2} \times \frac{\eta-2}{2}$.  Again, we show that the spectral gap corresponds to the largest eigenvalue of the $B_{2}$ matrix. In particular, we will follow a similar procedure to the open boundary case, utilizing Eq.~\eqref{eq:bounds} to find the bounds of the eigenstates of every $B_{\zeta}$ and calculating the exact largest eigenvalue of $B_2$, proving that the latter is larger than the former. 

A direct application of Eq.~\eqref{eq:layer-moment-lambda} allows us to find that for $k \neq 2, \eta-2$, the action of $\Lambda$ over these states  is (see the appendix for proof)
\footnotesize
\begin{equation}\label{eq:Lambda-2-switch}
    \Lambda \ket{k}' \!=\! \left(\frac{d^m}{d^{2m}+1}\right)^{4}\!\!\Big(\ket{k - 4}' + 4\ket{k - 2}' + 6\ket{k} + 4\ket{k + 2}' + \ket{k + 4}'\!\Big)\,,
\end{equation}
\normalsize
whereas for the cases $k = 2, \eta - 2$, we get the following equations:
\footnotesize
\begin{align}\label{eq:Lambda-2-switch-edges}
\Lambda \ket{2}' =& \left(\frac{d^m}{d^{2m}+1}\right)^{4}\Big(\frac{33}{4}|0\rangle' + 5|2 \rangle' + 4|4 \rangle' + 1|6 \rangle'\Big)\,,\nonumber\\
\Lambda \ket{\eta-2}' =& \left(\frac{d^m}{d^{2m}+1}\right)^{4}\Big(|\eta - 6 \rangle' + 4|\eta - 4 \rangle' + 5|\eta - 2 \rangle' + \frac{33}{4}|\eta \rangle'\Big).
\end{align}
\normalsize
The previous results can be used to construct $B_2$ as the following $\frac{\eta-2}{2} \times \frac{\eta-2}{2}$ matrix 
\begin{equation}\label{eq:matrix-CB-n=6}
B_2 = \left(\frac{d^m}{d^{2m}+1}\right)^{4}\begin{pmatrix}
5 & 4 & 1 & 0 & 0 & 0  & \dots & 0\\
4 & 6 & 4 & 1 & 0 & 0 & \dots & 0\\
1 & 4 & 6 & 4 & 1 & 0 & \dots & 0\\
0 & 1 & 4 & 6 & 4 & 1 & \dots & 0\\
\vdots & \vdots & \ddots & \ddots & \ddots & \ddots & \ddots & \vdots\\
0 & 0 & \dots & 1 & 4 & 6 & 4  & 1\\
0 & 0 & \dots & 0 & 1 & 4 & 6 & 4\\
0 & 0 & \dots & 0 & 0 & 1 & 4 & 5\\
\end{pmatrix}  \,.
\end{equation}

Notably, one can  verify that $B_2$ is the exact square of $B_1^\pm$ in Eq.~\eqref{eq:matrix-TT}, meaning that the eigenvalues of $B_2$ are exactly those of $B_1^\pm$, squared. From the previous, we can use Eq.~\eqref{eq:lambda_max}, to find that the spectral gap for closed boundary conditions is
\begin{align}\label{eq:lambda_max-closed}
    \lambda^{{\rm max}}&= \left(\frac{d^m}{d^{2m}+1}\right)^{4}\left(2+2\cos\left(\frac{2\pi}{\eta}\right)\right)^2\nonumber\\
    &=\left(\frac{d^m}{d^{2m}+1}\right)^4\left(2+2\cos\left(\frac{2m\pi}{n}\right)\right)^2\,.
\end{align}

Naturally, the coefficients of the leading eigenvector follows exactly from Eq.~\eqref{eq:x_max}
\begin{align}\label{eq:x_max_c}
    \ket{\lambda^{\rm max}}' =& \sin(\frac{2\pi}{\eta})\ket{2}' + \sin(\frac{4\pi}{\eta})\ket{4}' + \ldots\nonumber\\
    &+ \sin(\frac{(\eta-2)\pi}{\eta})\ket{\eta-2}'\,.
\end{align}

Finally, we note that the proof of Lemma~\ref{lemma:3} will still hold, as we are simply upper bounding the eigenvalues of the $B_\zeta$ blocks by their maximum row-sum. As such, we can compare Eqs.~\eqref{eq:bounds} and~\eqref{eq:lambda_max-closed} to find that for $\zeta\geq 4$, $\eta \geq 4$ and $m\geq1$
\begin{equation}
  \lambda_{\zeta,i}\leq \left(\frac{2d^m}{d^{2m}+1}\right)^{2\zeta}  \leq \lambda^{{\rm max}}.
\end{equation}
Note this inequality does not hold for $\zeta = 4$, $\eta = 4$ and $m = 1$, however, we again compute the largest eigenvalue of $\Lambda$ and find it to be that of $B_1$. This shows us that the largest eigenvalue of $\Lambda$ is $\lambda^{{\rm max}}$, i.e., 
\begin{equation}
   \lambda_{d,{\rm closed}}^{(n,m)}= \lambda^{{\rm max}}\,.
\end{equation}

\section{Numerical Results} 

In the previous section we have presented an exact analytical characterization for the spectral gap of random one-dimensional circuits with open and closed boundary conditions for the case where all the local gates are sampled i.i.d. from $\mathbb{U}(d^{2m})$. Here we present numerical results where we have computed the spectral gap for the unitary circuit, as well as for the following two cases:

\begin{enumerate}
    \item All local gates are sampled from the orthogonal group $G_{j,j'}^l=\mathbb{O}(d^2)$ $\forall j,j',l$, so that the ensuing unitary distribution converges in the large number of layers limit to $\mathbb{G}=\mathbb{O}(d^n)$. We refer to this as the ``\textit{orthogonal}'' case.
    \item The local gates are sampled from the orthogonal group $G_{j,j'}^l=\mathbb{O}(d^2)$ if $j,j'$ are not equal to one, and $G_{j,j'}^l=\mathbb{SP}(d^2/2)$ if  $j$ or $j'$ are equal to one. Now, the ensuing unitary distribution converges in the large number of layers limit to $\mathbb{G}=\mathbb{SP}(d^n/2)$~\cite{garcia2024architectures}. We refer to this as the ``\textit{symplectic}'' case.
\end{enumerate}

Here, we note that we tried extending our analytical results to these two cases, but the block diagonal structure becomes more complex, and it is unclear to us how to compute the eigenvalues associated to each block. 

For simplicity, we consider the cases when $m=1$, and $d=2$. That is, a random circuit where the local gates act on neighboring pairs of qubits. Our simulations are based on constructing the action of the second moment operator with a procedure similar to that in Eq.~\eqref{eq:layer-moment-lambda}. In particular, for Case 1 we observe that the single-layer moment operator $\TC_{\EC_1}^{(2)}$ can be represented by the $3^\eta\times 3^\eta$ operator 

\begin{equation}
 \Lambda_{\mathbb{O}}=\left(\prod_{j = 1}^{\frac{\eta}{2}} \tilde{A}_{2j-1,2j}\right)(\tilde{A}_{\eta,1})^\Delta  \left(\prod_{j = 1}^{\frac{\eta-2}{2}} \tilde{A}_{2j,2j+1} \right)\,, \nonumber
\end{equation}

where  $\tilde{A}$ is  the $9 \times 9$ matrix 
\begin{equation}\label{eq:matrix-P-2} 
\tilde{A} = 
       \begin{pmatrix}
1 & 7/18 & 7/18 & 7/18 & 0 & 1/18 & 7/18 & 1/18 & 0\\
0 & 0 & 0 & 0 & 0 & 0 & 0 & 0 & 0\\
0 & 0 & 0 & 0 & 0 & 0 & 0 & 0 & 0\\
0 & 0 & 0 & 0 & 0 & 0 & 0 & 0 & 0\\
0 & 7/18 & 1/18 & 7/18 & 1 & 7/18 & 1/18 & 7/18 & 0\\
0 & 0 & 0 & 0 & 0 & 0 & 0 & 0 & 0\\
0 & 0 & 0 & 0 & 0 & 0 & 0 & 0 & 0\\
0 & 0 & 0 & 0 & 0 & 0 & 0 & 0 & 0\\
0 & 1/18 & 7/18 & 1/18 & 0 & 7/18 & 7/18 & 7/18 & 1\\
\end{pmatrix} \,.
\end{equation}
We refer the reader to~\cite{braccia2024computing} for additional details on how this matrix is obtained. Then, for Case 1 the single-layer moment operator $\TC_{\EC_1}^{(2)}$ can be represented by the $3\cdot2^{\eta-1}\times 3\cdot2^{\eta-1}$ operator 

\begin{equation}
 \Lambda_{\mathbb{SP}}= \tilde{A}_{1,2} \left(\prod_{j = 2}^{\frac{\eta}{2}} A_{2j-1,2j}\right)(\tilde{A}_{\eta,1})^\Delta  \left(\prod_{j = 1}^{\frac{\eta-2}{2}} A_{2j,2j+1} \right)\,, \nonumber
\end{equation}
where $A$ and $\widetilde{A}$ are defined in Eqs.~\eqref{eq:matrix-P} and~\eqref{eq:matrix-P-2}, respectively (see also~\cite{garcia2024architectures}).

\begin{figure}[t]
    \centering
    \includegraphics[width=.95\columnwidth]{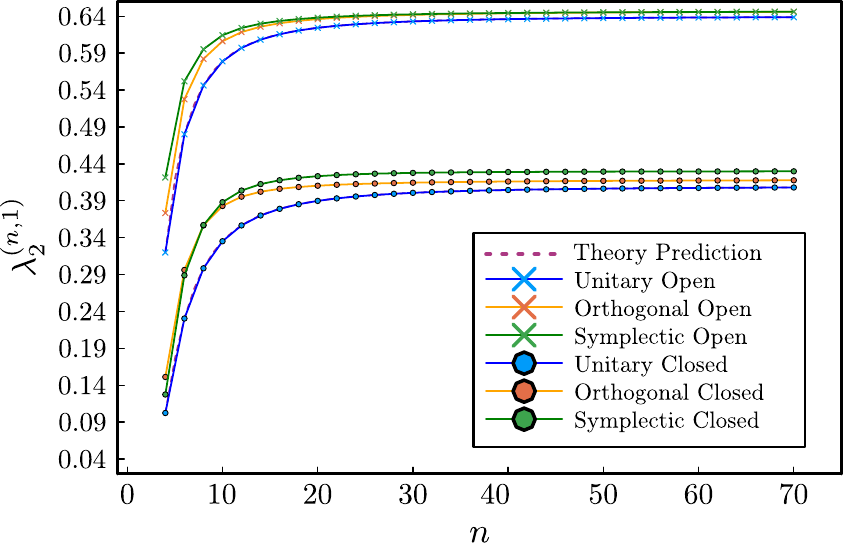}
    \caption{\textbf{Spectral gap for unitary, orthogonal and symplectic circuits.} We plot the spectral gap versus the number of qubits for different groups, as well as different boundary conditions on a single layered random quantum circuit. }
    \label{fig:numerics}
\end{figure}

Using Density Matrix Renormalization Group (DMRG)-based techniques (described in detail in the Appendix~\ref{appendixDMRG}), we numerically compute the spectral gap for the unitary, orthogonal and symplectic cases, and for open and closed boundary conditions. The results are shown in Fig.~\ref{fig:numerics}. Here we can first see that the numerical predictions exactly match our theoretical results for the unitary case. Moreover, we can see that regardless of the boundary conditions, the unitary circuit has the smallest spectral gap, indicating that it will require the least number of layers to be an approximate design over its associated group. Interestingly, we see that in the large $n$ limit, the symplectic case has the largest spectral gap. For open boundary conditions it appears that the spectral gaps for orthogonal and symplectic circuits match (for large $n$), but for closed boundary conditions we see that  the symplectic gap is larger than that of the orthogonal. Intriguingly, this behavior appears to not be monotonic as for small number of qubits, the symplectic spectral gap is actually smaller than the orthogonal one. Then, we can numerically check that in the orthogonal and symplectic cases, the spectral gaps are not related by a square root as in the unitary case. Indeed, we can take this fact as an indicator of the increased complexity of the symplectic and orthogonal cases.

\section{Discussion}

In this work we provide an exact characterization of the spectral gap for the second moment of random unitary quantum circuits on a one-dimensional lattice with nearest neighboring gates. We use Weingarten calculus to determine the upper triangular block-matrix form of the moment operator when acting on the bases arising from the local gate's commutant. An immediate consequence of our work is that bounds based on the one-dimensional spectral gap (e.g., for other architectures~\cite{belkin2023approximate}) can be immediately tightened, as the asymptotic scalings obtained from our exact formulae are better than those based on previous works. Moreover, we envision that knowledge of the upper triangular form, as well as the eigenvector associated to the spectral gap, could be useful in other contexts beyond those considered here (e.g. for error mitigation~\cite{hu2024demonstration}).

Then, we note that our work raises the following question: \textit{Can the techniques be extrapolated to consider larger moments, or other groups in the local gates?} Given that one ultimately needs to work with a basis determined by the local commutants, the calculations will naturally become more difficult for larger moments and for local gates sampled from subgroups of $\mathbb{U}(d^m)$, as any of those changes increase the local commutant's dimension. For instance, when the local gates are orthogonal, the commutant dimension changes from two to three, and already this small, albeit non-trivial, change makes the characterization of the moment operator much more difficult. Despite this fact, we hope that our work will serve as a starting point for other studies and that the tools presented here can serve as blueprints for other works. 

Next, we would like to highlight the recent work of~\cite{schuster2024random}, which shows that one-dimensional quantum circuits composed of gates acting on sets of $m$ qudits can converge to designs in a logarithmic number of layers. Such improvement is achieved by studying the relative error, rather than the additive one. Here, it is not hard to envision that the exact characterization of the eigenspaces of the single layer moment operator can be combined with the techniques in~\cite{schuster2024random} to further improve the convergence scaling of the second-moment operator. We leave such exploration for future work. 

To finish, here we briefly relate our spectral-gap view to recent information-theoretic approaches to symmetry testing. Our spectral–gap perspective quantifies, at the level of the $t=2$ moment operator, how rapidly random local circuits approach the corresponding Haar measure, and it makes the group dependence (unitary vs.\ orthogonal vs.\ symplectic) explicit through exact gaps in the cases we analyze. Recent information–theoretic works take a complementary approach by casting the task as hypothesis testing between Haar–random ensembles (e.g., $\mathbb{U}(d)$ versus $\mathbb{O}(d)$) and characterizing optimal discrimination protocols and error exponents via representation–theoretic performance operators and max–relative–entropy–type quantities. For instance, as shown in~\cite{chen2025hypothesis,hayashi2025predicting} parallel (non–adaptive) strategies can attain optimality.  Our exact second–moment gaps thus provide operational, group–dependent contraction factors that could inform—or be converted into—bounds on distinguishing advantage or query complexity when only low–moment information is available. Making this relation precise (for example, by connecting our $t=2$ gap to the exponents governing type–II error in optimal tests, or by extending the comparison to higher moments) is an interesting open problem for future work.

\section{Acknowledgments}

We thank Daniel Belkin, Shivan Mittal, Aroosa Ijaz and Diego Garc\'ia-Mart\'in for useful and insightful conversations. A.E.D., Pablo B., Paolo B.  and L.C. were supported by the Laboratory Directed Research and Development (LDRD) program of Los Alamos National Laboratory (LANL) under project numbers 20230527ECR and 20230049DR. Pablo B. acknowledges constant support from DIPC. M.C. acknowledges support from LANL's ASC Beyond Moore’s Law project. This work was also supported by the U.S. Department of Energy, Office of Science, Office of Advanced Scientific Computing Research, under Computational Partnerships program.

\bibliography{quantum}

@article{noschese2019eigenvector,
  title={Eigenvector sensitivity under general and structured perturbations of tridiagonal Toeplitz-type matrices},
  author={Noschese, Silvia and Reichel, Lothar},
  journal={Numerical Linear Algebra with Applications},
  volume={26},
  number={3},
  pages={e2232},
  year={2019},
  publisher={Wiley Online Library},
url={https://onlinelibrary.wiley.com/doi/full/10.1002/nla.2232},
doi={10.1002/nla.2232}
}

@article{dalzell2021random,
  url = {https://arxiv.org/abs/2111.14907},
  author = {Dalzell, Alexander M. and Hunter-Jones, Nicholas and Brandão, Fernando G. S. L.}, 
  title = {Random quantum circuits transform local noise into global white noise},
  journal = {arXiv preprint arXiv:2111.14907},
  year = {2021}
}

@article{dalzell2022randomquantum,
  title = {Random Quantum Circuits Anticoncentrate in Log Depth},
  author = {Dalzell, Alexander M. and Hunter-Jones, Nicholas and Brand\~ao, Fernando G. S. L.},
  journal = {PRX Quantum},
  volume = {3},
  issue = {1},
  pages = {010333},
  numpages = {43},
  year = {2022},
  month = {Mar},
  publisher = {American Physical Society},
  doi = {10.1103/PRXQuantum.3.010333},
  url = {https://link.aps.org/doi/10.1103/PRXQuantum.3.010333}
}

@article{huang2021provably,
      title={Provably efficient machine learning for quantum many-body problems}, 
      author={Hsin-Yuan Huang and Richard Kueng and Giacomo Torlai and Victor V. Albert and John Preskill},
  journal={Science},
  volume={377},
  number={6613},
  pages={eabk3333},
  year={2022},
  publisher={American Association for the Advancement of Science},
  url={https://www.science.org/doi/10.1126/science.abk3333},
  doi={10.1126/science.abk3333}
}

@article{brown2010random,
  title = {Convergence Rates for Arbitrary Statistical Moments of Random Quantum Circuits},
  author = {Brown, Winton G. and Viola, Lorenza},
  journal = {Phys. Rev. Lett.},
  volume = {104},
  issue = {25},
  pages = {250501},
  numpages = {4},
  year = {2010},
  month = {Jun},
  publisher = {American Physical Society},
  doi = {10.1103/PhysRevLett.104.250501},
  url = {https://link.aps.org/doi/10.1103/PhysRevLett.104.250501}
}

@article{nakata2017efficient,
  title={Efficient quantum pseudorandomness with nearly time-independent Hamiltonian dynamics},
  author={Nakata, Yoshifumi and Hirche, Christoph and Koashi, Masato and Winter, Andreas},
  journal={Physical Review X},
  volume={7},
  number={2},
  pages={021006},
  year={2017},
  publisher={APS},
url={https://journals.aps.org/prx/abstract/10.1103/PhysRevX.7.021006},
doi={10.1103/PhysRevX.7.021006}
}

@article{arute2019quantum,
author={Arute, Frank
and Arya, Kunal
and Babbush, Ryan
and Bacon, Dave
and Bardin, Joseph C.
and Barends, Rami
and Biswas, Rupak
and Boixo, Sergio
and Brandao, Fernando G. S. L.
and Buell, David A.
and Burkett, Brian
and Chen, Yu
and Chen, Zijun
and Chiaro, Ben
and Collins, Roberto
and Courtney, William
and Dunsworth, Andrew
and Farhi, Edward
and Foxen, Brooks
and Fowler, Austin
and Gidney, Craig
and Giustina, Marissa
and Graff, Rob
and Guerin, Keith
and Habegger, Steve
and Harrigan, Matthew P.
and Hartmann, Michael J.
and Ho, Alan
and Hoffmann, Markus
and Huang, Trent
and Humble, Travis S.
and Isakov, Sergei V.
and Jeffrey, Evan
and Jiang, Zhang
and Kafri, Dvir
and Kechedzhi, Kostyantyn
and Kelly, Julian
and Klimov, Paul V.
and Knysh, Sergey
and Korotkov, Alexander
and Kostritsa, Fedor
and Landhuis, David
and Lindmark, Mike
and Lucero, Erik
and Lyakh, Dmitry
and Mandr{\`a}, Salvatore
and McClean, Jarrod R.
and McEwen, Matthew
and Megrant, Anthony
and Mi, Xiao
and Michielsen, Kristel
and Mohseni, Masoud
and Mutus, Josh
and Naaman, Ofer
and Neeley, Matthew
and Neill, Charles
and Niu, Murphy Yuezhen
and Ostby, Eric
and Petukhov, Andre
and Platt, John C.
and Quintana, Chris
and Rieffel, Eleanor G.
and Roushan, Pedram
and Rubin, Nicholas C.
and Sank, Daniel
and Satzinger, Kevin J.
and Smelyanskiy, Vadim
and Sung, Kevin J.
and Trevithick, Matthew D.
and Vainsencher, Amit
and Villalonga, Benjamin
and White, Theodore
and Yao, Z. Jamie
and Yeh, Ping
and Zalcman, Adam
and Neven, Hartmut
and Martinis, John M.},
title={Quantum supremacy using a programmable superconducting processor},
journal={Nature},
year={2019},
month={Oct},
day={01},
volume={574},
number={7779},
pages={505-510},
issn={1476-4687},
doi={10.1038/s41586-019-1666-5},
url={https://doi.org/10.1038/s41586-019-1666-5}
}

@article{mcclean2018barren,
  title={Barren plateaus in quantum neural network training landscapes},
  author={McClean, Jarrod R and Boixo, Sergio and Smelyanskiy, Vadim N and Babbush, Ryan and Neven, Hartmut},
  journal={Nature {C}ommunications},
  volume={9},
  number={1},
  pages={1--6},
  year={2018},
  publisher={Nature Publishing Group},
  url={https://doi.org/10.1038/s41467-018-07090-4},
  doi={10.1038/s41467-018-07090-4}
}

@article{napp2022quantifying,
  title={Quantifying the barren plateau phenomenon for a model of unstructured variational ans\"{a}tze},
  author={Napp, John},
  journal={arXiv preprint arXiv:2203.06174},
  year={2022},
  url={https://arxiv.org/abs/2203.06174}
}

@book{nielsen2000quantum,
 author = {Michael A. Nielsen and Isaac L. Chuang},
 year = {2000},
 title = {Quantum Computation and Quantum Information},
 publisher = {Cambridge University Press},
 address = {Cambridge}
}

@article{skinner2019measurement,
  title={Measurement-induced phase transitions in the dynamics of entanglement},
  author={Skinner, Brian and Ruhman, Jonathan and Nahum, Adam},
  journal={Physical Review X},
  volume={9},
  number={3},
  pages={031009},
  year={2019},
  publisher={APS},
url={https://journals.aps.org/prx/abstract/10.1103/PhysRevX.9.031009},
doi={10.1103/PhysRevX.9.031009}
}

@article{li2018quantum,
  title={Quantum Zeno effect and the many-body entanglement transition},
  author={Li, Yaodong and Chen, Xiao and Fisher, Matthew PA},
  journal={Physical Review B},
  volume={98},
  number={20},
  pages={205136},
  year={2018},
  publisher={APS},
url={https://journals.aps.org/prb/abstract/10.1103/PhysRevB.98.205136},
doi={10.1103/PhysRevB.98.205136}
}

@article{jian2020measurement,
  title={Measurement-induced criticality in random quantum circuits},
  author={Jian, Chao-Ming and You, Yi-Zhuang and Vasseur, Romain and Ludwig, Andreas WW},
  journal={Physical Review B},
  volume={101},
  number={10},
  pages={104302},
  year={2020},
  publisher={APS},
url={https://journals.aps.org/prb/abstract/10.1103/PhysRevB.101.104302},
doi={10.1103/PhysRevB.101.104302}
}

@misc{SI,
        Date-Modified = {2020-10-23 04:08:48 -0400},
        Note = {See Supplementary Information for additional data and analysis.}}

@article{wu2021strong,
  title={Strong quantum computational advantage using a superconducting quantum processor},
  author={Wu, Yulin and Bao, Wan-Su and Cao, Sirui and Chen, Fusheng and Chen, Ming-Cheng and Chen, Xiawei and Chung, Tung-Hsun and Deng, Hui and Du, Yajie and Fan, Daojin and others},
  journal={Physical Review Letters},
  volume={127},
  number={18},
  pages={180501},
  year={2021},
  publisher={APS},
  url={https://journals.aps.org/prl/abstract/10.1103/PhysRevLett.127.180501},
  doi={10.1103/PhysRevLett.127.180501}
}

@article{von2018operator,
  title={Operator hydrodynamics, OTOCs, and entanglement growth in systems without conservation laws},
  author={von Keyserlingk, Curt W and Rakovszky, Tibor and Pollmann, Frank and Sondhi, Shivaji Lal},
  journal={Physical Review X},
  volume={8},
  number={2},
  pages={021013},
  year={2018},
  publisher={APS},
url={https://journals.aps.org/prx/abstract/10.1103/PhysRevX.8.021013},
doi={10.1103/PhysRevX.8.021013}
}

@article{nahum2017quantum,
  title={Quantum entanglement growth under random unitary dynamics},
  author={Nahum, Adam and Ruhman, Jonathan and Vijay, Sagar and Haah, Jeongwan},
  journal={Physical Review X},
  volume={7},
  number={3},
  pages={031016},
  year={2017},
  publisher={APS},
url={https://journals.aps.org/prx/abstract/10.1103/PhysRevX.7.031016},
doi={10.1103/PhysRevX.7.031016}
}

@article{haferkamp2021improved,
  title={Improved spectral gaps for random quantum circuits: Large local dimensions and all-to-all interactions},
  author={Haferkamp, Jonas and Hunter-Jones, Nicholas},
  journal={Physical Review A},
  volume={104},
  number={2},
  pages={022417},
  year={2021},
  publisher={APS},
url={https://journals.aps.org/pra/abstract/10.1103/PhysRevA.104.022417},
doi={10.1103/PhysRevA.104.022417}
}

@article{larocca2024review,
  title={A Review of Barren Plateaus in Variational Quantum Computing},
  author={Larocca, Martin and Thanasilp, Supanut and Wang, Samson and Sharma, Kunal and Biamonte, Jacob and Coles, Patrick J and Cincio, Lukasz and McClean, Jarrod R and Holmes, Zo{\"e} and Cerezo, M},
   journal={Nature Reviews Physics},
   volume={3},
   number={1},
   pages={625–644},
   publisher={Nature Publishing Group},
   year={2025},
   url={https://www.nature.com/articles/s42254-025-00813-9},
   doi={10.1038/s42254-025-00813-9}
 }

@article{cerezo2020cost,
  title={Cost function dependent barren plateaus in shallow parametrized quantum circuits},
  author={Cerezo, M. and Sone, Akira and Volkoff, Tyler and Cincio, Lukasz and Coles, Patrick J},
  journal={Nature {C}ommunications},
  volume={12},
  number={1},
  pages={1--12},
  year={2021},
  publisher={Nature Publishing Group},
  url = {https://doi.org/10.1038/s41467-021-21728-w},
  doi={10.1038/s41467-021-21728-w}
}

@article{pesah2020absence,
  title={Absence of Barren Plateaus in Quantum Convolutional Neural Networks},
  author={Pesah, Arthur and Cerezo, M. and Wang, Samson and Volkoff, Tyler and Sornborger, Andrew T and Coles, Patrick J},
  journal={Physical Review X},
  volume={11},
  number={4},
  pages={041011},
  year={2021},
  publisher={APS},
  url={https://journals.aps.org/prx/abstract/10.1103/PhysRevX.11.041011},
  doi={10.1103/PhysRevX.11.041011}
}

@article{mele2023introduction,
  title={Introduction to Haar Measure Tools in Quantum Information: A Beginner's Tutorial},
  author={Mele, Antonio Anna},
  journal={Quantum},
  volume={8},
  pages={1340},
  year={2024},
  publisher={Verein zur F{\"o}rderung des Open Access Publizierens in den Quantenwissenschaften},
url={https://quantum-journal.org/papers/q-2024-05-08-1340/},
doi={10.22331/q-2024-05-08-1340}
}

@article{braccia2024computing,
  title={Computing exact moments of local random quantum circuits via tensor networks},
  author={Braccia, Paolo and  Bermejo, Pablo and   Cincio, Lukasz and  Cerezo,  M.},
  journal={Quantum Machine Intelligence},
  volume={6},
  number={2},
  pages={54},
  year={2024},
  publisher={Springer},
url={https://link.springer.com/article/10.1007/s42484-024-00187-8},
doi={10.1007/s42484-024-00187-8}
}

@article{ragone2023unified,
  title={A Lie algebraic theory of barren plateaus for deep parameterized quantum circuits},
  author={Ragone, Michael and Bakalov, Bojko N and Sauvage, Fr{\'e}d{\'e}ric and Kemper, Alexander F and Ortiz Marrero, Carlos and Larocca, Mart{\'\i}n and Cerezo, M},
  journal={Nature Communications},
  volume={15},
  number={1},
  pages={7172},
  year={2024},
  publisher={Nature Publishing Group UK London},
url={https://www.nature.com/articles/s41467-024-49909-3#citeas},
doi={10.1038/s41467-024-49909-3}
}

@article{hu2024demonstration,
  title={Demonstration of Robust and Efficient Quantum Property Learning with Shallow Shadows},
  author={Hu, Hong-Ye and Gu, Andi and Majumder, Swarnadeep and Ren, Hang and Zhang, Yipei and Wang, Derek S and You, Yi-Zhuang and Minev, Zlatko and Yelin, Susanne F and Seif, Alireza},
  journal={arXiv preprint arXiv:2402.17911},
  year={2024},
url={https://arxiv.org/abs/2402.17911}
}

@article{fontana2023theadjoint,
    title={Characterizing barren plateaus in quantum ansätze with the adjoint representation},
  author={Fontana, Enrico and Herman, Dylan and Chakrabarti, Shouvanik and Kumar, Niraj and Yalovetzky, Romina and Heredge, Jamie and Sureshbabu, Shree Hari and Pistoia, Marco},
  journal={Nature Communications},
  volume={15},
  number={1},
  pages={7171},
  year={2024},
  publisher={Nature Publishing Group UK London},
url={https://www.nature.com/articles/s41467-024-49910-w},
doi={10.1038/s41467-024-49910-w}
}

@article{belkin2023approximate,
  title={Approximate t-designs in generic circuit architectures},
  author={Belkin, Daniel and Allen, James and Ghosh, Soumik and Kang, Christopher and Lin, Sophia and Sud, James and Chong, Fred and Fefferman, Bill and Clark, Bryan K},
  journal={PRX Quantum},
  volume={5},
  number={4},
  pages={040344},
  year={2024},
  publisher={APS},
url={https://journals.aps.org/prxquantum/abstract/10.1103/PRXQuantum.5.040344},
doi={10.1103/PRXQuantum.5.040344}
}

@article{dankert2009exact,
  title={Exact and approximate unitary 2-designs and their application to fidelity estimation},
  author={Dankert, Christoph and Cleve, Richard and Emerson, Joseph and Livine, Etera},
  journal={Physical Review A},
  volume={80},
  number={1},
  pages={012304},
  year={2009},
  publisher={APS},
  url={https://journals.aps.org/pra/abstract/10.1103/PhysRevA.80.012304}
}

@article{collins2006integration,
  title={Integration with respect to the Haar measure on unitary, orthogonal and symplectic group},
  author={Collins, Beno{\^\i}t and {\'S}niady, Piotr},
  journal={Communications in Mathematical Physics},
  volume={264},
  number={3},
  pages={773--795},
  year={2006},
  publisher={Springer},
  url={https://link.springer.com/article/10.1007%2Fs00220-006-1554-3},
  doi={	10.1007/s00220-006-1554-3}
}

@article{chen2024efficient,
title={Efficient Unitary T-designs from Random Sums}, 
      author={Chi-Fang Chen and Jordan Docter and Michelle Xu and Adam Bouland and Patrick Hayden},
      journal={arXiv preprint arXiv:2402.09335},
  year={2024},
url={https://arxiv.org/abs/2402.09335}
}

@article{Haferkamp2022randomquantum,
  doi = {10.22331/q-2022-09-08-795},
  url = {https://doi.org/10.22331/q-2022-09-08-795},
  title = {Random quantum circuits are approximate unitary {$t$}-designs in depth {$O\left(nt^{5+o(1)}\right)$}},
  author = {Haferkamp, Jonas},
  journal = {{Quantum}},
  issn = {2521-327X},
  publisher = {{Verein zur F{\"{o}}rderung des Open Access Publizierens in den Quantenwissenschaften}},
  volume = {6},
  pages = {795},
  month = sep,
  year = {2022}
}

@article{harrow2009random,
  title={Random quantum circuits are approximate 2-designs},
  author={Harrow, Aram W and Low, Richard A},
  journal={Communications in Mathematical Physics},
  volume={291},
  number={1},
  pages={257--302},
  year={2009},
  publisher={Springer},
  url={https://link.springer.com/article/10.1007%2Fs00220-009-0873-6},
  doi={10.1007/s00220-009-0873-6}
}

@article{ho2022exact,
  title={Exact emergent quantum state designs from quantum chaotic dynamics},
  author={Ho, Wen Wei and Choi, Soonwon},
  journal={Physical Review Letters},
  volume={128},
  number={6},
  pages={060601},
  year={2022},
  publisher={APS},
url={https://journals.aps.org/prl/abstract/10.1103/PhysRevLett.128.060601},
doi={10.1103/PhysRevLett.128.060601}
}

@article{harrow2018approximate,
  title={Approximate unitary t-designs by short random quantum circuits using nearest-neighbor and long-range gates},
  author={Harrow, Aram W and Mehraban, Saeed},
  journal={Communications in Mathematical Physics},
  volume={401},
  number={2},
  pages={1531--1626},
  year={2023},
  publisher={Springer},
url={https://link.springer.com/article/10.1007/s00220-023-04675-z},
doi={10.1007/s00220-023-04675-z}
}

@article{brandao2016local,
  title={Local random quantum circuits are approximate polynomial-designs},
  author={Brandao, Fernando GSL and Harrow, Aram W and Horodecki, Micha{\l}},
  journal={Communications in Mathematical Physics},
  volume={346},
  number={2},
  pages={397--434},
  year={2016},
  publisher={Springer},
  url={https://link.springer.com/article/10.1007%2Fs00220-016-2706-8},
  doi={10.1007/s00220-016-2706-8}
}

@article{schuster2024random,
  title={Random unitaries in extremely low depth},
  author={Schuster, Thomas and Haferkamp, Jonas and Huang, Hsin-Yuan},
  journal={arXiv preprint arXiv:2407.07754},
  year={2024},
url={https://arxiv.org/abs/2407.07754}
}

@article{metger2024simple,
  title={Simple constructions of linear-depth t-designs and pseudorandom unitaries},
  author={Metger, Tony and Poremba, Alexander and Sinha, Makrand and Yuen, Henry},
  journal={arXiv preprint arXiv:2404.12647},
  year={2024},
url={https://arxiv.org/abs/2404.12647}
}

@article{chen2024incompressibility,
  title={Incompressibility and spectral gaps of random circuits},
  author={Chen, Chi-Fang and Haah, Jeongwan and Haferkamp, Jonas and Liu, Yunchao and Metger, Tony and Tan, Xinyu},
  journal={arXiv preprint arXiv:2406.07478},
  year={2024},
url={https://arxiv.org/abs/2406.07478}
}

@article{hunter2019unitary,
  title={Unitary designs from statistical mechanics in random quantum circuits},
  author={Hunter-Jones, Nicholas},
  journal={arXiv preprint arXiv:1905.12053},
  year={2019},
  url={https://arxiv.org/abs/1905.12053}
}

@article{nahum2018operator,
  title={Operator spreading in random unitary circuits},
  author={Nahum, Adam and Vijay, Sagar and Haah, Jeongwan},
  journal={Physical Review X},
  volume={8},
  number={2},
  pages={021014},
  year={2018},
  publisher={APS},
url={https://journals.aps.org/prx/abstract/10.1103/PhysRevX.8.021014},
doi={10.1103/PhysRevX.8.021014}
}

@article{puchala2017symbolic,
  title={Symbolic integration with respect to the Haar measure on the unitary groups},
  author={Puchala, Zbigniew and Miszczak, Jaroslaw Adam},
  journal={Bulletin of the Polish Academy of Sciences Technical Sciences},
  volume={65},
  number={1},
  pages={21--27},
  year={2017},
  publisher={De Gruyter Open},
  url={http://journals.pan.pl/dlibra/publication/121307/edition/105697/content},
  doi={10.1515/bpasts-2017-0003}
}

@article{mittal2023local,
  title={Local random quantum circuits form approximate designs on arbitrary architectures},
  author={Mittal, Shivan and Hunter-Jones, Nicholas},
  journal={arXiv preprint arXiv:2310.19355},
  year={2023},
url={https://arxiv.org/abs/2310.19355}
}

@article{boixo2018characterizing,
  title={Characterizing quantum supremacy in near-term devices},
  author={Boixo, Sergio and Isakov, Sergei V and Smelyanskiy, Vadim N and Babbush, Ryan and Ding, Nan and Jiang, Zhang and Bremner, Michael J and Martinis, John M and Neven, Hartmut},
  journal={Nature Physics},
  volume={14},
  number={6},
  pages={595--600},
  year={2018},
  publisher={Nature Publishing Group},
  url={https://www.nature.com/articles/s41567-018-0124-x},
  doi={0.1038/s41567-018-0124-x}
}

@article{haferkamp2022random,
  doi = {10.22331/q-2022-09-08-795},
  url = {https://doi.org/10.22331/q-2022-09-08-795},
  title = {Random quantum circuits are approximate unitary {$t$}-designs in depth {$O\left(nt^{5+o(1)}\right)$}},
  author = {Haferkamp, Jonas},
  journal = {{Quantum}},
  issn = {2521-327X},
  publisher = {{Verein zur F{\"{o}}rderung des Open Access Publizierens in den Quantenwissenschaften}},
  volume = {6},
  pages = {795},
  month = sep,
  year = {2022}
}

@article{oszmaniec2022fermion,
  title={Fermion sampling: a robust quantum computational advantage scheme using fermionic linear optics and magic input states},
  author={Oszmaniec, Michal and Dangniam, Ninnat and Morales, Mauro ES and Zimbor{\'a}s, Zolt{\'a}n},
  journal={PRX Quantum},
  volume={3},
  number={2},
  pages={020328},
  year={2022},
  publisher={APS},
  doi={10.1103/PRXQuantum.3.020328},
  url={https://journals.aps.org/prxquantum/abstract/10.1103/PRXQuantum.3.020328}
}

@article{garcia2023deep,
  title={Quantum neural networks form Gaussian processes},
  author={Garc{\'\i}a-Mart{\'\i}n, Diego and Larocca, Martin and Cerezo, M.},
  journal={Nature Physics},
  volume={21},
  pages={1153},
  year={2025},
  publisher={Nature Publishing Group UK London},
url={https://www.nature.com/articles/s41567-025-02883-z},
doi={10.1038/s41567-025-02883-z}
}

@article{garcia2024architectures,
doi = {10.1088/2058-9565/ae20b5},
url = {https://doi.org/10.1088/2058-9565/ae20b5},
year = {2025},
month = {nov},
publisher = {IOP Publishing},
volume = {11},
number = {1},
pages = {015012},
author = {García-Martín, Diego and Braccia, Paolo and Cerezo, M},
title = {Architectures and random properties of symplectic quantum circuits},
journal = {Quantum Science and Technology}
}

@article{chen2025hypothesis,
  title={Hypothesis testing of symmetry in quantum dynamics},
  author={Chen, Yu-Ao and Zhu, Chenghong and He, Keming and Liu, Yingjian and Wang, Xin},
  journal={Physical Review A},
  volume={111},
  number={5},
  pages={052406},
  year={2025},
  publisher={APS},
url={https://journals.aps.org/pra/abstract/10.1103/PhysRevA.111.052406},
doi={10.1103/PhysRevA.111.052406}
}

@article{hayashi2025predicting,
  title={Predicting symmetries of quantum dynamics with optimal samples},
  author={Hayashi, Masahito and Chen, Yu-Ao and Zhu, Chenghong and Wang, Xin},
  journal={arXiv preprint arXiv:2502.01464},
  year={2025},
url={https://arxiv.org/abs/2502.01464},
doi={10.48550/arXiv.2502.01464
}
}

\clearpage
\newpage
\onecolumngrid
\appendix
\section*{Appendix}
\setcounter{lemma}{0}

In this appendix we present the proof for Lemmas~\ref{lemma:1}--\ref{lemma:3}, as well as for Eqs.~\eqref{eq:Lambda-1-switch}, ~\eqref{eq:Lambda-2-switch} and~\eqref{eq:Lambda-2-switch-edges}. We also discuss details of our numerical simulations.

\section{Proof of Lemma~\ref{lemma:1}}
\begin{lemma}\label{lemma:1-ap}
    An eigenvector of $\Lambda$ must also be an eigenvector of $A^{\otimes \frac{\eta}{2}}$ and thus can only be written in terms of basis vectors $\ket{\alpha_0}^{\otimes k_0} \otimes \ket{\alpha_1}^{\otimes k_1} \otimes \dots \otimes \ket{\alpha_\zeta}^{\otimes k_\zeta}$ such that $k_j$ is even $\forall j$.
\end{lemma}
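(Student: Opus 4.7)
The plan is to leverage the idempotency of the local matrix $A$ in Eq.~\eqref{eq:matrix-P}. First I would verify $A^2=A$ directly from its matrix form; it suffices to check that $A\ket{ii}=\ket{ii}$ and $A\ket{ss}=\ket{ss}$, since these two vectors span the range of $A$, so $A$ acts as the identity on its own image. Consequently $P := A^{\otimes \eta/2}$ is itself a projector whose range is the linear span of all tensor products of $\ket{ii}$ and $\ket{ss}$ across the pairs $(2j-1,2j)$.

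Next, using Eq.~\eqref{eq:lambda-simple} I would factor $\Lambda = P\,Q$ with $Q := \id_2\otimes A^{\otimes(\eta-2)/2}\otimes\id_2$. Suppose $\Lambda\ket{v}=\lambda\ket{v}$ with $\lambda\neq 0$. Applying $P$ from the left on both sides of this eigenvalue equation and using $P^2=P$ yields $P\Lambda\ket{v}=P^{2}Q\ket{v}=PQ\ket{v}=\Lambda\ket{v}=\lambda\ket{v}$, while the right-hand side becomes $\lambda P\ket{v}$. Equating these two expressions gives $\lambda\ket{v}=\lambda P\ket{v}$, and since $\lambda\neq 0$ we conclude $P\ket{v}=\ket{v}$. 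Thus $\ket{v}$ lies in the range of $A^{\otimes \eta/2}$, i.e.\ it is a $+1$ eigenvector of that projector, establishing the first assertion of the lemma.

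To translate this into the block-decomposition language, observe that a vector in the range of $P$ is a linear combination of tensor products whose pair $(2j-1,2j)$ is either $\ket{ii}$ or $\ket{ss}$; equivalently, every basis string $\alpha_1\alpha_2\cdots\alpha_\eta$ appearing in the expansion must satisfy $\alpha_{2j-1}=\alpha_{2j}$ for every $j$. Reading off the block decomposition, the first block starts at position $1$, an odd index, and the pair constraint forces it to end at an even position, so $k_0$ is even; the next block then again begins at an odd position, and iterating the argument yields $k_j$ even for all $j$. For the spectral-gap analysis, which only tracks the largest nonzero eigenvalue, this is exactly the restriction we need.

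The main (and essentially only) technical point is verifying $A^2=A$ cleanly from Eq.~\eqref{eq:matrix-P}; once that is in hand, the rest is a short linear-algebra argument combined with a direct reading of the range of a tensor-product projector. The zero eigenspace of $\Lambda$ may contain additional vectors with components in $\ker(P)$ that do not admit the stated block form, but such vectors are immaterial to bounding the spectral gap and can be set aside.
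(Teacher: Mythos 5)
Your proposal is correct and follows essentially the same route as the paper's proof: both use idempotency of $A^{\otimes \eta/2}$ to show that $A^{\otimes \eta/2}\Lambda=\Lambda$, restrict to nonzero eigenvalues to conclude the eigenvector lies in the $+1$ eigenspace of $A^{\otimes \eta/2}$, and then read off that each basis string must satisfy $\alpha_{2j-1}=\alpha_{2j}$, forcing all block lengths $k_j$ to be even. Your explicit caveat about the zero eigenspace matches the paper's implicit restriction to non-zero eigenvalues, so no gap remains.
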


\begin{proof} Given that each $A$ is a projector, then so is $A^{\otimes \frac{\eta}{2}}$. With that we have
\begin{align}
 A^{\otimes \frac{\eta}{2}} \Lambda \ket{\psi} = A^{\otimes \frac{\eta}{2}} A^{\otimes \frac{\eta}{2}} (\id_2 \otimes A^{\frac{\eta - 2}{2}} \otimes \id_2) \ket{\psi} = A^{\otimes \frac{\eta}{2}} (\id_2 \otimes A^{\frac{\eta - 2}{2}} \otimes \id_2) \ket{\psi} = \Lambda \ket{\psi} \,.
\end{align}
Thus, $\Lambda\ket{\psi}$ is an eigenvector of $A^{\otimes \frac{\eta}{2}}$ for any $\ket{\psi}$, so we can say that $\Lambda$ projects into the eigenspace of $A^{\otimes \frac{\eta}{2}}$. Moreover, since $A^{\otimes \frac{\eta}{2}}$ is a projector its eigenvalues can only be equal to $1$ or $0$. As we are only concerned with non-zero eigenvalues of $\Lambda$ then we can assume that the eigenvector $\ket{\psi}$ of $\Lambda$ is also an eigenvector of $A^{\otimes \frac{\eta}{2}}$ with eigenvalue $1$. We can see that an eigenvector of $A^{\otimes \frac{\eta}{2}}$ with eigenvalue $1$ must be a linear combination of basis terms of the form $|\alpha_1 \alpha_2 \alpha_3 \dots \alpha_\eta \rangle$ such that for $j$ odd, $\alpha_j = \alpha_{j+1}$. This is because $A^{\otimes \frac{\eta}{2}}$ applies an $A$ matrix to neighboring pairs of qubits, $\alpha_j \alpha_{j+1}$, such that $j$ is odd. Therefore an eigenvector of $A^{\otimes \frac{\eta}{2}}$ (and thus of $\Lambda$) with a non-zero eigenvalue, must be in the form $\ket{\alpha_0}^{\otimes k_0} \otimes \ket{\alpha_1}^{\otimes k_1} \otimes \dots \otimes \ket{\alpha_\zeta}^{\otimes k_\zeta}$ where $k_j$ is even $\forall j$.
\end{proof}

\section{Proof of Lemma~\ref{lemma:2}} 
\begin{lemma}\label{lemma:2-ap}
If a vector is a linear combination of basis terms with $\zeta$, or less, switches then the action on this vector by $\Lambda$ will result in a linear combination of basis states with $\zeta$, or less, switches. 
\end{lemma}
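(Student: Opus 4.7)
The plan is to reduce the statement about $\Lambda$ to a purely local statement about the action of a single $A$ on a single adjacent pair of sites. Since $\Lambda$ is, by Eq.~\eqref{eq:lambda-simple}, a product of operators of the form $A_{j,j+1}$, and since ``number of switches'' is a well-defined integer attached to each basis state of $(\mathbb{R}^2)^{\otimes \eta}$, it suffices to prove the following local claim: if $A$ acts on a single adjacent pair $(j,j+1)$ of a basis state $\ket{\alpha_1\cdots\alpha_\eta}$, the image is a linear combination of basis states each having no more switches than the original. Once this is established, applying it factor-by-factor along the product defining $\Lambda$, together with linearity in $\ket{\psi}$, yields the lemma.

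The key ingredient for the local claim is the explicit action of $A$ read off from Eq.~\eqref{eq:matrix-P}: $A$ fixes $\ket{ii}$ and $\ket{ss}$, and sends both $\ket{is}$ and $\ket{si}$ to $c(\ket{ii}+\ket{ss})$ with $c=d^m/(d^{2m}+1)$. So the only nontrivial case is $(\alpha_j,\alpha_{j+1})\in\{(i,s),(s,i)\}$, where the updated pair in each branch is either $(i,i)$ or $(s,s)$. To count switches, I would track only the three adjacencies that can be affected by the move, namely $(\alpha_{j-1},\alpha_j)$, $(\alpha_j,\alpha_{j+1})$, and $(\alpha_{j+1},\alpha_{j+2})$; all other adjacencies are untouched.

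I would then do a short enumeration over the four choices of the neighbors $(\alpha_{j-1},\alpha_{j+2})\in\{i,s\}^2$ in the case $(\alpha_j,\alpha_{j+1})=(i,s)$ (the $(s,i)$ case being symmetric). In every case, the central adjacency loses its switch (going from $i\neq s$ to $i=i$ or $s=s$), while the two neighboring adjacencies gain at most one switch in total. A direct check shows that for each of the two image basis states the total switch count either remains the same or decreases by $2$; it never increases. For a pair at the boundary ($j=1$ or $j+1=\eta$), only one neighboring adjacency exists, which makes the enumeration even simpler and the bound still holds.

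Combining the local claim with the product structure of $\Lambda$ then finishes the proof: inductively, each factor maps the span of basis states with at most $\zeta$ switches into itself, so the full $\Lambda$ does as well. The only mild obstacle is the bookkeeping of the boundary adjacency when the active pair sits at site $1$ or site $\eta$, but this is handled by the same case analysis with one fewer adjacency to track, so it presents no genuine difficulty.
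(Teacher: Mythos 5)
Your proof is correct, but it takes a genuinely more local route than the paper's. The paper works at the level of a full layer: it invokes Lemma~\ref{lemma:1} to restrict to even blocks, isolates the four sites $\ket{iiss}$ straddling a switch, and explicitly computes the action of both sublayers of $\Lambda$ on that window, reading off that the switch is sent to a superposition of configurations with at most one switch (and extracting the coefficients $\bigl(\tfrac{d^m}{d^{2m}+1}\bigr)^2(1,2,1)$, which are then reused verbatim to derive Eq.~\eqref{eq:Lambda-1-switch}). You instead prove the sharper, purely local statement that a \emph{single} $A$ on one adjacent pair never increases the switch count --- the enumeration over the neighbors $(\alpha_{j-1},\alpha_{j+2})$ is easily checked from Eq.~\eqref{eq:matrix-P}, with the central adjacency always losing its switch and at most one neighboring adjacency gaining one --- and then propagate this through the product structure of $\Lambda$ by induction. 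Your argument buys generality: it needs neither Lemma~\ref{lemma:1} nor the specific two-sublayer ordering, and it covers the closed-boundary factor $A_{\eta,1}$ with the same case analysis under cyclic adjacency, which the paper handles separately. What it does not buy is the explicit amplitude bookkeeping that the paper's window computation feeds directly into the construction of $B_1$ and $B_2$; since the paper needs those coefficients anyway, its proof of the lemma comes essentially for free from that computation. Both arguments are sound.
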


\begin{proof}
 Let $\ket{\psi}$ be a computational basis state with $\zeta$ switches, i.e., $\ket{\psi} = \ket{\alpha_0}^{\otimes k_0} \otimes \ket{\alpha_1}^{\otimes k_1} \otimes \dots \otimes \ket{\alpha_{\zeta}}^{\otimes k_{\zeta}}$. Due to Lemma~\ref{lemma:1}, $k_j$ is even for all $j$. To study the action of $\Lambda$ on this state, we first apply the $(\id_2 \otimes A^{\frac{\eta - 2}{2}} \otimes \id_2)$ matrix. It is clear that this will leave most of the state unchanged, except at the switches. Let us analyze the subsystem of the first switch in the state, i.e., $\ket{\alpha_0}^{\otimes k_0} \otimes \ket{\alpha_1}^{\otimes k_1}$. Without loss of generality we let $\alpha_0 = i$ then $\alpha_1 = s$. To further simplify the analysis we examine the system of 4 qubits that define the switch: $\ket{iiss}$. Since $k_0$ and $k_1$ are even, applying $(\id_2 \otimes A^{\frac{\eta - 2}{2}} \otimes \id_2)$ applies an $A$ gate on the middle 2 qubits so we get
\begin{equation}
    (\id_2 \otimes A \otimes \id_2)\ket{iiss} = \frac{2}{5}(\ket{iiis} + \ket{isss})\,.
\end{equation}
Next, we multiply by the $A^{\otimes \frac{\eta}{2}}$ matrix which will apply an $A$ gate to the first and last pairs of qubits in the considered reduced subsystem
\begin{equation}
    A^{\otimes 2} \frac{2}{5}(\ket{iiis} + \ket{isss}) = \frac{4}{25}(\ket{iiii} + \ket{iiss} + \ket{iiss} + \ket{ssss})\,.
\end{equation}
Hence, we see that $\Lambda$ acts on the subsystem of the first switch as $(\ket{\alpha_0}^{\otimes k_0} \otimes \ket{\alpha_1}^{\otimes k_1}) \rightarrow \frac{4}{25}[\ket{\alpha_0}^{\otimes k_0 - 2} \otimes \ket{\alpha_1}^{\otimes k_1 + 2} + 2\ket{\alpha_0}^{\otimes k_0} \otimes \ket{\alpha_1}^{\otimes k_1} + \ket{\alpha_0}^{\otimes k_0 + 2} \otimes \ket{\alpha_1}^{\otimes k_1 - 2}]$.  
Thus, depending on $k_0, k_1$, the first switch will be taken to a superposition of terms with either a single switch or no switches. This same procedure can be applied to all of the switches in the state $\ket{\psi}$ and thus the application of $\Lambda$ on a state with $\zeta$ switches will result in a linear combination of basis states with at most $\zeta$ switches.
\end{proof}

\section{Proof of Eq.~\eqref{eq:Lambda-1-switch}: Action of $\Lambda$ on 1-switch terms} \label{appendix:lambda-proof}

We have from Eq.~\eqref{eq:lambda-simple} that $\Lambda = A^{\otimes \frac{\eta}{2}}(\id_2 \otimes A^{\frac{\eta - 2}{2}} \otimes \id_2)$. Let us express an arbitrary 1-switch vector for the open boundary conditions as $\ket{k}_+ = \ket{i}^{\otimes k - 2} \otimes \ket{iiss} \otimes \ket{s}^{\otimes \eta - k - 2}$, where $k \neq 0, \eta$. Multiplying this vector by  $\id_2 \otimes A^{\frac{\eta - 2}{2}} \otimes \id_2$ leads to
\begin{align}
    &\frac{d^m}{d^{2m} + 1}\left(\ket{i}^{\otimes k - 2} \otimes \ket{iiis} \otimes \ket{s}^{\otimes \eta - k - 2}\right) + \frac{d^m}{d^{2m} + 1}\left(\ket{i}^{\otimes k - 2} \otimes \ket{isss} \otimes \ket{s}^{\otimes \eta - k - 2}\right) \nonumber\\
    &= \frac{d^m}{d^{2m} + 1} \left(\ket{i}^{\otimes k - 2} \otimes (\ket{iiis} + \ket{isss}) \otimes \ket{s}^{\otimes \eta - k - 2}\right) \label{eq:app-c-action},
\end{align}
where we have used the fact that $k$ is even as per Lemma~\ref{lemma:1}. Then, multiplying this by $A^{\otimes \frac{\eta}{2}}$
\footnotesize
\begin{align}
    &\left(\frac{d^m}{d^{2m} + 1}\right)^2\left(\ket{i}^{\otimes k - 2} \otimes \ket{iiii} \otimes \ket{s}^{\otimes \eta - k - 2}\right) + 2\left(\frac{d^m}{d^{2m} + 1}\right)^2\left(\ket{i}^{\otimes k - 2} \otimes \ket{iiss} \otimes \ket{s}^{\otimes \eta - k - 2}\right)+ \left(\frac{d^m}{d^{2m} + 1}\right)^2\left(\ket{i}^{\otimes k - 2} \otimes \ket{ssss} \otimes \ket{s}^{\otimes \eta - k - 2}\right) \nonumber\\
    &= \left(\frac{d^m}{d^{2m} + 1}\right)^2 \ket{i}^{\otimes k - 2} \otimes \left(\ket{iiii} + 2\ket{iiss} + \ket{ssss}\right) \otimes \ket{s}^{\otimes \eta - k - 2}= \left(\frac{d^m}{d^{2m} + 1}\right)^2\left(\ket{k +2}_+ + 2\ket{k}_+ + \ket{k - 2}_+\right).\nonumber
\end{align}
\normalsize
The result for a general $\ket{k}_-$ follows accordingly. Thus we find that the action of $\Lambda$ on a general 1-switch term $\ket{k}_\pm$, i.e., when $2 \leq k \leq \eta - 2$, is
\begin{equation}
    \Lambda\ket{k}_{\pm} = \left(\frac{d^m}{d^{2m}+1}\right)^{2}\left(\ket{k +2}_{\pm} + 2\ket{k}_{\pm} + \ket{k - 2}_{\pm}\right) \nonumber
\end{equation}

For $k = 0, \eta$ we have the all $s$ and all $i$ vectors, respectively, which are eigenstates of $A$ and therefore so $\Lambda \ket{0}_\pm = \ket{0}_\pm$ and $\Lambda \ket{\eta}_\pm = \ket{\eta}_\pm$. $\qed$

\section{Proof of Eqs.~\eqref{eq:Lambda-2-switch} and~\eqref{eq:Lambda-2-switch-edges}: Action of $\Lambda$ on 2-switch terms}

Similar to the open boundary case we want to study the action of $\Lambda$ on a generic 2-switch basis vector. Thus, let us define the vector $\ket{\psi} = \ket{i} \otimes \ket{i}^{\otimes k - 3} \otimes \ket{iiss} \otimes \ket{s}^{\otimes \eta-k-3} \otimes \ket{s}$ with $4 \leq k \leq \eta - 4$.  The action of $(\id_2 \otimes A^{\frac{\eta - 2}{2}} \otimes \id_2)$  follows from Eq.~\eqref{eq:app-c-action}, so that
\begin{align}
    \id_2 \otimes A^{\frac{\eta - 2}{2}} \otimes \id_2\ket{\psi} = \frac{d^m}{d^{2m} + 1}\left (\ket{i} \otimes\ket{i}^{\otimes k - 3} \otimes (\ket{iiis} + \ket{isss}) \otimes \ket{s}^{\otimes \eta - k - 3} \otimes\ket{s}\right) \nonumber\,.
\end{align}

Now we multiply by an A matrix acting non-trivially on the first and last qubit, resulting in a state of the form
\begin{align}
    &\left(\frac{d^m}{d^{2m} + 1}\right)^2 \Big(\ket{ii} \otimes\ket{i}^{\otimes k - 4} \otimes \ket{iiis} \otimes \ket{s}^{\otimes \eta - k - 4} \otimes\ket{si} + \ket{si} \otimes\ket{i}^{\otimes k - 4} \otimes \ket{iiis} \otimes \ket{s}^{\otimes \eta - k - 4} \otimes\ket{ss} \nonumber\\
    &+ \ket{ii} \otimes\ket{i}^{\otimes k - 4} \otimes \ket{isss} \otimes \ket{s}^{\otimes \eta - k - 4} \otimes\ket{si}+ \ket{si} \otimes\ket{i}^{\otimes k - 4} \otimes \ket{isss} \otimes \ket{s}^{\otimes \eta - k - 4} \otimes\ket{ss}\Big) \nonumber\,.
\end{align}
As we can see each term has 2 switches. The action of $A^{\otimes \frac{\eta}{2}}$ on this vector of four terms results in a vector of sixteen terms as shown below
\begin{align}\label{eq:closed-general}
    \left(\frac{d^m}{d^{2m} + 1}\right)^4 \Big(&\ket{ii} \otimes\ket{i}^{\otimes k - 4} \otimes \ket{iiii} \otimes \ket{s}^{\otimes \eta - k - 4} \otimes\ket{ii} + \ket{ii} \otimes\ket{i}^{\otimes k - 4} \otimes \ket{iiii} \otimes \ket{s}^{\otimes \eta - k - 4} \otimes\ket{ss}\nonumber\\
    &+ \ket{ii} \otimes\ket{i}^{\otimes k - 4} \otimes \ket{iiss} \otimes \ket{s}^{\otimes \eta - k - 4} \otimes\ket{ii} + \ket{ii} \otimes\ket{i}^{\otimes k - 4} \otimes \ket{iiss} \otimes \ket{s}^{\otimes \eta - k - 4} \otimes\ket{ss}\nonumber\\
    &+ \ket{ii} \otimes\ket{i}^{\otimes k - 4} \otimes \ket{iiii} \otimes \ket{s}^{\otimes \eta - k - 4} \otimes\ket{ss} + \ket{ss} \otimes\ket{i}^{\otimes k - 4} \otimes \ket{iiii} \otimes \ket{s}^{\otimes \eta - k - 4} \otimes\ket{ss}\nonumber\\
    &+ \ket{ii} \otimes\ket{i}^{\otimes k - 4} \otimes \ket{iiss} \otimes \ket{s}^{\otimes \eta - k - 4} \otimes\ket{ss} + \ket{ss} \otimes\ket{i}^{\otimes k - 4} \otimes \ket{iiss} \otimes \ket{s}^{\otimes \eta - k - 4} \otimes\ket{ss}\nonumber\\
    &+ \ket{ii} \otimes\ket{i}^{\otimes k - 4} \otimes \ket{ssss} \otimes \ket{s}^{\otimes \eta - k - 4} \otimes\ket{ii} + \ket{ii} \otimes\ket{i}^{\otimes k - 4} \otimes \ket{ssss} \otimes \ket{s}^{\otimes \eta - k - 4} \otimes\ket{ss}\nonumber\\
    &+ \ket{ii} \otimes\ket{i}^{\otimes k - 4} \otimes \ket{iiss} \otimes \ket{s}^{\otimes \eta - k - 4} \otimes\ket{ii} + \ket{ii} \otimes\ket{i}^{\otimes k - 4} \otimes \ket{iiss} \otimes \ket{s}^{\otimes \eta - k - 4} \otimes\ket{ss}\nonumber\\
    &+ \ket{ii} \otimes\ket{i}^{\otimes k - 4} \otimes \ket{ssss} \otimes \ket{s}^{\otimes \eta - k - 4} \otimes\ket{ss} + \ket{ss} \otimes\ket{i}^{\otimes k - 4} \otimes \ket{ssss} \otimes \ket{s}^{\otimes \eta - k - 4} \otimes\ket{ss}\nonumber\\
    &+ \ket{ii} \otimes\ket{i}^{\otimes k - 4} \otimes \ket{iiss} \otimes \ket{s}^{\otimes \eta - k - 4} \otimes\ket{ss} + \ket{ss} \otimes\ket{i}^{\otimes k - 4} \otimes \ket{iiss} \otimes \ket{s}^{\otimes \eta - k - 4} \otimes\ket{ss}\Big ) \nonumber\,.
\end{align}
The previous equation can be simplified by consolidating blocks of $i$ and $s$
\begin{align}
    \left(\frac{d^m}{d^{2m} + 1}\right)^4 \Big(&\ket{i}^{\otimes k + 2} \otimes \ket{s}^{\otimes \eta - k - 4} \otimes\ket{ii}+ 2 \ket{i}^{\otimes k + 2}\otimes \ket{s}^{\otimes \eta - k - 2}+ 2\ket{i}^{\otimes k} \otimes \ket{s}^{\otimes \eta - k - 2} \otimes\ket{ii}+ 4\ket{i}^{\otimes k}\otimes \ket{s}^{\otimes \eta - k}\nonumber\\
    &+ \ket{ss} \otimes\ket{i}^{\otimes k} \ket{s}^{\otimes \eta - k - 2}+ \ket{i}^{\otimes k - 2} \otimes \ket{s}^{\otimes \eta - k} \otimes\ket{ii}+ 2\ket{i}^{\otimes k - 2} \otimes \ket{s}^{\otimes \eta - k + 2}+ 2\ket{ss} \otimes\ket{i}^{\otimes k - 2} \otimes \ket{s}^{\otimes \eta - k}\nonumber\\
    &+ \ket{ss} \otimes\ket{i}^{\otimes k - 4} \otimes \ket{s}^{\otimes \eta - k + 2}\Big)\nonumber\,.
\end{align}

As mentioned in the main text, the action of $\Lambda$ commutes with that of $\sigma^p$ for all $p$, where $\sigma$ is a two-qubit site translation. Thus, we can further simplify the expression by writing it in terms of our basis $\{ \ket{k}' \}$. This lets us combine terms that are related by cyclical permutation, i.e., terms that have the same number of $i$ (or $s$) states. As such, we find that 
\begin{equation}
    \Lambda \ket{k}' = \left(\frac{d^m}{d^{2m} + 1}\right)^4 \left(\ket{k - 4}' + 4\ket{k - 2}' + 6\ket{k}' + 4\ket{k + 2}' + \ket{k + 4}'\right) \nonumber\,.
\end{equation}

To finish, we consider  the cases where $k = 2, \eta - 2$. For $k = 2$ we have a vector $\ket{ii} \otimes \ket{s}^{\otimes \eta - 2}$. Multiplying by $(\id_2 \otimes A^{\frac{\eta - 2}{2}} \otimes \id_2)$ will follow as in Appendix~\ref{appendix:lambda-proof} resulting in
\begin{equation}
    \frac{d}{d^{2m} + 1}\left(\ket{iii} \otimes \ket{s}^{\otimes \eta - 3} + \ket{i} \otimes \ket{s}^{\otimes \eta - 1}\right) \nonumber\,.
\end{equation}
Then applying an $A$ gate to the first and last qubit leads to
\begin{equation}
    \left(\frac{d}{d^{2m} + 1}\right)^2\left(\ket{iii} \otimes \ket{s}^{\otimes \eta - 4} \otimes \ket{i} + \ket{i} \otimes \ket{s}^{\otimes \eta - 2} \otimes \ket{i} + \ket{sii} \otimes \ket{s}^{\otimes \eta - 3} + \ket{s}^{\otimes \eta}\right) \nonumber\,.
\end{equation}

Next we apply the operation $A^{\otimes \frac{\eta}{2}}$ which results in the following state
\begin{align}
    \left(\frac{d}{d^{2m} + 1}\right)^4\Big(&\ket{iiii} \otimes \ket{s}^{\otimes \eta - 6} \otimes \ket{ii} + \ket{ii} \otimes \ket{s}^{\otimes \eta - 4} \otimes \ket{ii} + \ket{iiii} \otimes \ket{s}^{\otimes \eta - 4} + \ket{ii} \otimes \ket{s}^{\otimes \eta - 2} \nonumber\\
    &+ \ket{ii} \otimes \ket{s}^{\otimes \eta - 4} \otimes \ket{ii} + \ket{ii} \otimes \ket{s}^{\otimes \eta - 2} + \ket{s}^{\otimes \eta - 2} \otimes \ket{ii} + \ket{s}^{\otimes \eta}\nonumber\\
    &+ \ket{iiii} \otimes \ket{s}^{\otimes \eta - 4} + \ket{ssii} \otimes \ket{s}^{\otimes \eta - 4} + \ket{ii} \otimes \ket{s}^{\otimes \eta - 2} + \ket{s}^{\otimes \eta}\Big)+ \left(\frac{d}{d^{2m} + 1}\right)^2\ket{s}^{\otimes \eta} \nonumber\,.
\end{align}
We can use this result to expressing the action on the basis $\{ \ket{k}' \}$ as
\begin{equation}
    \Lambda \ket{2}' = \left(\frac{d^m}{d^{2m}+1}\right)^{4}\Big(\frac{33}{4}|0\rangle' + 5|2 \rangle' + 4|4 \rangle' + 1|6 \rangle'\Big)\,.
\end{equation}

Using  the symmetry of $i$ and $s$ basis states, the case with $k = \eta - 2$ follows immediately as
\begin{align}
\Lambda \ket{\eta-2}' =& \left(\frac{d^m}{d^{2m}+1}\right)^{4}\Big(|\eta - 6 \rangle' + 4|\eta - 4 \rangle' + 5|\eta - 2 \rangle' + \frac{33}{4}|\eta \rangle'\Big)\,.\nonumber \qed
\end{align}

\section{Proof Lemma~\ref{lemma:3}}

\begin{lemma}\label{lemma:3-append}
    Let $\{\lambda_{\zeta,i} \}$ be the  eigenvalues of $B_\zeta$. Then, the following  upper  bound holds for all $i$, and $\zeta \geq 1$
\begin{equation}\label{eq:bounds-append}
    \lambda_{\zeta,i}\leq \left(\frac{2d^m}{d^{2m}+1}\right)^{2\zeta} \,.
\end{equation}
\end{lemma}

\begin{proof}
We can find upper bounds on the eigenvalues of the $B_\zeta$ submatrices by finding the largest row/column sum. To do this we look at how $B_\zeta$ sends $\zeta$-switch terms to other $\zeta$-switch terms. Let $\ket{x^{(\zeta)}}$ be a vector with $\zeta$ switches. Then $(\id_2 \otimes A^{\frac{\eta - 2}{2}} \otimes \id_2)\ket{x^{(\zeta)}} = (\frac{d^m}{d^{2m} + 1})^\zeta \sum_{j = 1}^{2^\zeta} \ket{y^{(\zeta')}_j}$. Now $A^{\otimes \frac{\eta}{2}} (\id_2 \otimes A^{\frac{\eta - 2}{2}} \otimes \id_2)\ket{x^{(k)}} = (\frac{d^m}{d^{2m} + 1})^\zeta \sum_{i = 1}^{2^\zeta} (\frac{d^m}{d^{2m} + 1})^{\zeta'} \sum_{j = 1}^{2^{\zeta'}} \ket{y^{(\zeta'')}_{ij}}$ for $\zeta'' \leq \zeta' \leq \zeta$ by Lemma~\ref{lemma:2}. Then the maximum rowsum of $B_\zeta$ will be given when $\zeta'' = \zeta' = \zeta$ and is given by 
\begin{equation}\label{eq:bounds-old-append}
    \left(\frac{d^m}{d^{2m} + 1}\right)^{2{\zeta}} \cdot 2^{2\zeta} = \left(\frac{2d^m}{d^{2m} + 1}\right)^{2\zeta}\,.
\end{equation}

\end{proof}

\section{Details on the DMRG numerical simulations}
\label{appendixDMRG}
Here we explain the techniques used to numerically compute the spectral gaps via  Density Matrix Renormalization Group (DMRG) for up to $n = 70$ qubits.  First of all, we realize that the layered structure of the $\Lambda$ matrix can be exploited to simplify the calculation. In particular, we express $\Lambda=L_1\cdot L_2$, where $L_1$ contains the matrices acting on the local Hilbert spaces indexed by $(2,3)$, and $L_2$ denotes the one containing the matrices acting on sites indexed by $(1,2)$.

As previously mentioned, $L_1$ and $L_2$ are projectors and, hence, $L_1L_2$ and $L_1L_2L_1$ have the same spectrum. The operator $L_1L_2L_1$ can now be converted into a Hermitian matrix by means of a non-unitary change of basis $g$. In particular, we apply such a local change of basis to each $A$ matrix as  $(g^{-1} \otimes g^{-1})A(g \otimes g)$. Thus, we are left with a Hermitian operator that can be efficiently turned into a Matrix Product Operator (MPO). The efficiency of the method lies in a low and constant bond dimension for that MPO. 

Given that  we are interested in retrieving the largest non-one eigenvalue of $\Lambda$, we need to remove from this operator the projectors onto the second-order commutant of the group that the circuit belongs to. For instance, in the unitary case we need to subtract the projectors onto the   all $i$ and all $s$ states. This can be achieved by redefining the operator of interest as $H =  L_1L_2L_1 - \sum_j v_j^\dagger v_j$, where $v_j$ stand for the vectorized elements of the second-order commutant of the group to which the circuit's distribution converges. The operator $H$ is the one employed for the DMRG procedure, which finds the ground states of $-H$. During our simulations, we detect that a bond dimension of $\chi = 70-80$ for the Matrix Product State (MPS) describing the approximation to the ground state is enough to reach convergence up to 8 digits in energy in all the cases we present in this work, up to system sizes of $70$ qutrits in the case of the orthogonal and symplectic groups, and over $100$ qubits in the case of the unitary group.

\end{document}